\newcommand{\cod}{\mathit{cod}}
\newcommand{\cofission}[1]{{\cofissionop({#1})}}
\newcommand{\cofissionop}{{\tilde{\fis}}}
\newcommand{\cofusion}[1]{{\cofusionop({#1})}}
\newcommand{\cofusionop}{{\tilde{\fus}}}
\newcommand{\converse}{\smallsmile}
\newcommand{\convex}[1]{{#1}{\updownarrow}}
\newcommand{\dom}{\mathit{dom}}
\newcommand{\down}[1]{{#1}{\downarrow}}
\newcommand{\dual}[1]{{#1}^\mathsf{d}}
\newcommand{\fboxalp}[1]{{[ {#1} ]_\alpha}}
\newcommand{\fboxast}[1]{{[ {#1} ]_\seq}}
\newcommand{\fboxr}[1]{{[ {#1} ]_r}}
\newcommand{\fdiaalp}[1]{{\langle {#1} \rangle_\alpha}}
\newcommand{\fdiaast}[1]{{\langle {#1} \rangle_\seq}}
\newcommand{\fdiar}[1]{{\langle {#1} \rangle_r}}
\newcommand{\FF}{\mathcal{F}}
\newcommand{\fis}{\delta_i}
\newcommand{\fus}{\delta_o}
\newcommand{\gfboxalp}[1]{{[ {#1} ]_\alpha^\GG}}
\newcommand{\gfboxast}[1]{{[ {#1} ]_\seq^\GG}}
\newcommand{\gfboxr}[1]{{[ {#1} ]_r^\GG}}
\newcommand{\gfdiaalp}[1]{{\langle {#1} \rangle_\alpha^\GG}}
\newcommand{\gfdiaast}[1]{{\langle {#1} \rangle_\seq^\GG}}
\newcommand{\gfdiar}[1]{{\langle {#1} \rangle_r^\GG}}
\newcommand{\GG}{\mathcal{G}}
\newcommand{\icpl}[1]{{\sim}{#1}}
\newcommand{\Id}{\mathit{Id}}
\newcommand{\ii}{\Cap}
\newcommand{\iiatoms}{{\mathsf{A}_\ii}}
\newcommand{\iione}{{1_\ii}}
\newcommand{\iu}{\Cup}
\newcommand{\iU}{\raisebox{-0.5ex}{\Large$\iu$}}
\newcommand{\iuatoms}{{\mathsf{A}_\iu}}
\newcommand{\iuone}{{1_\iu}}
\newcommand{\kleisli}{\Pow}
\newcommand{\Mult}{M}
\newcommand{\Pow}{\mathcal{P}}
\newcommand{\Rel}{\mathbf{Rel}}
\newcommand{\RR}{\mathcal{R}}
\newcommand{\rto}{\leftrightarrow}
\newcommand{\seq}{\ast}
\newcommand{\seqint}{\odot}
\newcommand{\Set}{\mathbf{Set}}
\newcommand{\subem}{\mathrel{\sqsubseteq_\updownarrow}}
\newcommand{\subh}{\mathrel{\sqsubseteq_\downarrow}}
\newcommand{\subs}{\mathrel{\sqsubseteq_\uparrow}}
\newcommand{\syq}[2]{{#1} \div {#2}}
\newcommand{\up}[1]{{#1}{\uparrow}}
\newtheorem{theorem}{Theorem}[section]
\newtheorem{proposition}[theorem]{Proposition}
\newtheorem{lemma}[theorem]{Lemma}
\newtheorem{corollary}[theorem]{Corollary}
\theoremstyle{definition}
\newtheorem{example}[theorem]{Example}
\newtheorem{remark}[theorem]{Remark}
\begin{document}

\title{Modal Algebra of Multirelations}
\author{Hitoshi Furusawa, Walter Guttmann and Georg Struth}
\maketitle

\begin{abstract}
  We formalise the modal operators from the concurrent dynamic logics of Peleg, Nerode and Wijesekera in a multirelational algebraic language based on relation algebra and power allegories, using relational approximation operators on multirelations developed in a companion article.
  We relate Nerode and Wijesekera's box operator with a relational approximation operator for multirelations and two related operators that approximate multirelations by different kinds of deterministic multirelations.
  We provide an algebraic soundness proof of Goldblatt's axioms for concurrent dynamic logic as an application.
\end{abstract}

\section{Introduction}
\label{section.introduction}

This is the third article in a trilogy on the inner structure of multirelations~\cite{FurusawaGuttmannStruth2023a}, the determinisation of such relations~\cite{FurusawaGuttmannStruth2023b} and their algebras of modal operators.

Multirelations are relations of type $X \rto \Pow Y$, which model alternating nondeterminism.
Here we contribute to a line of work on modal algebras of multirelations~\cite{FurusawaStruth2015a,FurusawaStruth2016} related to Peleg's concurrent dynamic logic~\cite{Peleg1987} and algebraic languages for these~\cite{FurusawaKawaharaStruthTsumagari2017}.
These languages are extensions of relation algebra~\cite{Schmidt2011} and boolean power allegories~\cite{FreydScedrov1990} with specific operations for multirelations.

Our main motivation has been the algebraic formalisation of Nerode and Wijesekera's modal box operator~\cite{NerodeWijesekera1990} for concurrent dynamic logic.
In dynamic logics, relational modal box operators
\begin{equation*}
  \fboxr{S} Q = \{ a \in X \mid \forall b \in Y .\ S_{a,b} \Rightarrow b \in Q \},
\end{equation*}
for the relation $S : X \rto Y$ and the set $Q \subseteq Y$, typically express correctness specifications of relational programs: $\fboxr{S} Q$ determines the set of states from which every terminating execution of program $S$ must be in the set $Q$.
Yet Peleg's multirelational box operator
\begin{equation*}
  \fboxast{R} Q = \{ a \mid \forall B .\ R_{a,B} \Rightarrow B \cap Q \neq \emptyset \}
\end{equation*}
fails to capture the correctness of the multirelational program $R : X \rto \Pow Y$ in the presence of nonterminating elements of the form $(a,\emptyset)$~\cite{NerodeWijesekera1990}.
Nerode and Wijesekera therefore propose an alternative multirelational box
\begin{equation*}
  \fboxalp{R} Q = \{ a \mid \forall B .\ R_{a,B} \Rightarrow B \subseteq Q \};
\end{equation*}
Goldblatt has subsequently reduced it to a relational box, $\fboxalp{R} Q = \fboxr{\alpha(R)} Q$, approximating the multirelation $R : X \rto \Pow Y$ by the relation
\begin{equation*}
  \alpha(R) : X \rto Y, R \mapsto \{ (a,b) \mid b \in \bigcup \{ B \mid (a,B) \in R \} \},
\end{equation*}
in which the alternating structure of $R$ has been atomised.
But how can $\fboxalp{-}$ and $\alpha(-)$ be formalised in an algebraic multirelational language?
Which operations and constructs are needed for expressing it and for studying its relationship to $\fboxast{-}$?

A key observation is that Goldblatt's $\alpha : (X \rto \Pow Y) \to (X \rto Y)$ can be expressed using fundamental concepts of power allegories.
For any multirelation $R : X \to \Pow Y$,
\begin{equation*}
  \alpha(R) = R{\ni},
\end{equation*}
the relational composition of $R$ with the converse of the element-(multi)relation ${\in} : Y \rto \Pow Y$.
Likewise we can define a new De Morgan dual diamond operator $\fdiaalp{-} = \fdiar{-} \circ \alpha$.
Yet the combination of $\fboxalp{-}$ with Peleg's multirelational diamond $\fdiaast{-}$, which is used in Wijesekera and Nerode's as well as in Goldblatt's concurrent dynamic logics, and the study of the relationships between the different modal operators requires further multirelational operations, including Peleg's composition of multirelations.
Due to the complexity of their interactions we consider them in concrete extensions and enrichments of the category $\Rel$ in this work, but with a view towards future axiomatic approaches, and therefore by and large through algebraic proofs.

Modal operators on relations or multirelations usually map relations $X \rto Y$ or multirelations $X \rto \Pow Y$ to functions or ``predicate transformers'' $\Pow Y \to \Pow X$.
Modal diamonds then arise as relational or multirelational preimage operations and modal boxes as their De Morgan duals.
Transformers $\Pow X \to \Pow Y$ are obtained by opposition.
Algebraically, these preimages can be expressed using relational and multirelational domain operations together with relational or Peleg composition.
Alternatively, via the isomorphism between relations $X \rto Y$ and functions $X \to \Pow Y$, predicate transformers $\Pow X \to \Pow Y$ can be obtained as Kleisli extensions of Kleisli arrows in the powerset monad on $\Set$.
Similarly, we can map multirelations $X \rto \Pow Y$ to ``relational predicate transformers'' $\Pow X \rto \Pow Y$ using a Kleisli lifting for multirelations introduced in~\cite{FurusawaKawaharaStruthTsumagari2017}, while relations $X \rto Y$ are sent to such transformers by the relational image functor of power allegories.
Predicate transformers $\Pow Y \rto \Pow X$ can again be obtained via relational converse.
The graph functor translates these transformers from $\Set$ to $\Rel$.

While these constructions of modalities depend only on concepts of power allegories and on Peleg composition, the relationships between the different modal operators require closure and duality properties of the inner structure of multirelations and notions of inner determinism and inner functionality (or univalence), which have been studied in the first two parts of this trilogy~\cite{FurusawaGuttmannStruth2023a,FurusawaGuttmannStruth2023b}.
Here we harvest the concepts and results sown in the previous parts to obtain the results outlined above.
As an application, we prove soundness of a variant of Goldblatt's axiomatisation of concurrent dynamic logic.

As in~\cite{FurusawaGuttmannStruth2023a,FurusawaGuttmannStruth2023b}, we have used the Isabelle/HOL proof assistant to formalise and check many results in this article, see~\cite{GuttmannStruth2023}, without aiming at a complete formalisation.
Our article is therefore self-contained without the Isabelle libraries.


\section{Relations and Multirelations}
\label{section.relation-multirelation}

We start with recalling the basics of binary relations and multirelations, following~\cite{FurusawaGuttmannStruth2023a,FurusawaGuttmannStruth2023b}; see also the references therein.
Our multirelational language is closely related to allegorical and relation-algebraic approaches~\cite{SchmidtStroehlein1989,Schmidt2011,FreydScedrov1990,BirdMoor1997,FurusawaKawaharaStruthTsumagari2017}.
We also use many properties from~\cite{FurusawaGuttmannStruth2023a,FurusawaGuttmannStruth2023b}.
We list all relational and multirelational concepts with respect to a small basis in Appendix~\ref{section.basis}, which extends similar lists in the two predecessor articles.

\subsection{Binary relations}
\label{subsection.binary-relations}

Following~\cite{FurusawaGuttmannStruth2023a,FurusawaGuttmannStruth2023b}, we work in enrichments of the category $\Rel$.
We write $X \rto Y$ for the homset $\Rel(X,Y)$, $\Id_X$ for the identity relation on $X$, $\emptyset_{X,Y}$ for the least and $U_{X,Y}$ for the greatest element in $X \rto Y$, $-R$ for the complement of $R$ and $S-R$ for the relative complement $S \cap -R$, $R S$ for the relational composition of relations $R$, $S$ of suitable type, $R / S$ and $R \backslash S$ for the left and right residuals of $R$ and $S$, and $R^\converse$ for the converse of $R$.
We frequently need the modular law $R S \cap T \subseteq (R \cap T S^\converse) S$ and the properties $T \backslash S = (S^\converse / T^\converse)^\converse$, $T / S = -(-T S^\converse)$ and $T \backslash S = -(T^\converse (-S))$ of residuals.

We also need the following concepts: the \emph{symmetric quotient} $\syq{T}{S} : X \rto Y$, defined as $\syq{T}{S} = (T \backslash S) \cap (T^\converse / S^\converse)$, \emph{tests}, which are relations $R \subseteq \Id$, and whose relational composition is intersection, and the \emph{domain} map
\begin{equation*}
  R : X \rto Y, \, R \mapsto \Id_X \cap R R^\converse = \Id_X \cap R U_{Y,X} = \{ (a,a) \mid \exists b .\ R_{a,b} \}.
\end{equation*}
Domain elements and tests form the same full subalgebra of $\Rel(X,X)$ for any $X$, a complete atomic boolean algebra.
The boolean complement of a test $P$ is $\neg P = \Id - P$.

Deterministic relations play an important role in our work.
The relation $R : X \rto Y$ is \emph{total} if $\dom(R) = \Id_X$, or equivalently $\Id_X \subseteq R R^\converse$, \emph{univalent}, or a \emph{partial function}, if $R^\converse R \subseteq \Id_Y$, and \emph{deterministic}, or a \emph{function}, if it is total and univalent.
Functions as deterministic relations in $\Rel$ are of course graphs of functions in $\Set$.
We need the relational law $P Q \cap S = (P \cap S Q^\converse) Q$ for univalent $Q$~\cite{SchmidtStroehlein1989} in calculations.

Our proofs about modal operators are strongly based on concepts from power allegories, and in particular monadic concepts in relational form~\cite{FreydScedrov1990,BirdMoor1997}.
We summarise these concepts in the following; see~\cite{FurusawaGuttmannStruth2023b} for details.
The standard isomorphism between relations in $X \rto Y$ and nondeterministic functions in $X \to \Pow Y$ in $\Set$ can be expressed in $\Rel$ by taking graphs.
The \emph{power transpose} $\Lambda(R) = \syq{R^\converse}{\in_Y} = \{ (a,R(a)) \mid a \in X \}$ maps relations $X \rto Y$ to functions in $X \rto \Pow Y$, where $\in_Y : Y \rto \Pow Y$ is the membership relation on $Y$.
Conversely, relational postcomposition with its converse, the \emph{has-element relation} $\ni_Y : \Pow Y \rto Y$, maps relations in $X \rto \Pow Y$ to relations in $X \rto Y$.
We henceforth write $\alpha = (-){\ni}$.
This function satisfies $\alpha(R) = \{ (a,b) \mid b \in \bigcup R(a) \}$.

The \emph{relational image functor} $\Pow : (X \rto Y) \to (\Pow X \rto \Pow Y), \, R \mapsto \Lambda({\ni_X} R)$, which satisfies $\Pow(R) = \{ (A,R(A)) \mid A \subseteq X \}$, codes the relational image, given by the covariant powerset functor in $\Set$, again as a graph.
It is deterministic by definition.
The unit and multiplication of the monad of the powerset functor in $\Set$ are recovered in $\Rel$ as $\eta_X : X \rto \Pow X$ and $\mu_X : \Pow^2 X \rto \Pow X$ such that $\eta_X = \Lambda (\Id_X) = \{ (a,\{a\}) \mid a \in X \}$ and $\mu_X = \Pow({\ni_X})$.

Apart from tests we need the \emph{power test} $P_\seq : \Pow X \rto \Pow X$ of any test $P \subseteq \Id_X$, which is defined as $P_\seq = ({\in_X} \backslash P {\in_X}) \cap \Id_{\Pow X}$~\cite{FurusawaKawaharaStruthTsumagari2017}.
Equivalently, it can be expressed as $P_\seq = ({\in_X} \backslash P U_{X,\Pow X}) \cap \Id_{\Pow X} = \{ (A,A) \mid \forall a \in A .\ (a,a) \in P \}$, and it holds that $P_\seq \subseteq \Id_{\Pow X} = (\Id_X)_\seq$.

Finally, we need the \emph{subset relation} $\Omega_Y = {\in_Y} \backslash {\in_Y} = \{ (A,B) \mid A \subseteq B \subseteq Y \}$ and the \emph{complementation relation} $C = \syq{\in_Y}{-{\in_Y}} = \{ (A,-A) \mid A \subseteq Y \}$ to manipulate multirelations.

\subsection{Multirelations}
\label{SS:multirelations}

A \emph{multirelation} is an arrow $X \rto \Pow Y$ in $\Rel$ and therefore a doubly-nondeterministic function $X \to \Pow^2 Y$ in $\Set$.
We write $\Mult(X,Y)$ for the homset $X \rto \Pow Y$.
Multirelations allow two levels of nondeterminism and two level of choices: an outer or angelic level given two elements $(a,B)$ and $(a,C)$ of a multirelation, and an inner or demonic level given by the elements $b \in B$ for any $(a,B)$.

The \emph{Peleg composition}~\cite{Peleg1987} $\seq : (X \rto \Pow Y) \times (Y \rto \Pow Z) \to (X \rto \Pow Z)$ can be defined in two steps from the \emph{Kleisli lifting} $(-)_\kleisli : (X \rto \Pow Y) \to (\Pow X \rto \Pow Y)$ and the \emph{Peleg lifting} $(-)_\seq : (X \rto \Pow Y) \to (\Pow X \rto \Pow Y)$ of multirelations~\cite{FurusawaKawaharaStruthTsumagari2017}:
\begin{equation*}
  R_\kleisli = \Pow(\alpha(R)), \qquad
  R_\seq = \dom(R)_\seq \bigcup_{S \subseteq_d R} S_\kleisli, \qquad
  R \seq S = R S_\seq,
\end{equation*}
where, for $R, S : X \rto Y$, $S \subseteq_d R$ if $S$ is univalent, $\dom(S) = \dom(R)$ and $S \subseteq R$.
Expanding definitions,
\begin{align*}
  R_\kleisli & = \left\{ (A,B) \mid B = \bigcup R(A) \right\}, \\
  R_\seq & = \left\{ (A,B) \mid \exists f : X \to \Pow Y .\ f|_A \subseteq R \wedge B = \bigcup f(A) \right\}, \\
  R \seq S & = \left\{ (a,C) \mid \exists B .\ R_{a,B} \wedge \exists f : Y \to \Pow Z .\ f|_B \subseteq S \wedge C = \bigcup f(B) \right\}.
\end{align*}

The Kleisli lifting is the multirelational analogue of the Kleisli lifting or Kleisli extension in the Kleisli category of the powerset monad; see~\cite{FurusawaGuttmannStruth2023b} for more details.
It can also be seen as the relational image of the relational approximation of a given multirelation using the map $\alpha$.
By definition, Kleisli liftings of multirelations are functions in $\Rel$.
Peleg and Kleisli liftings coincide on deterministic multirelations: $R_\seq = R_\kleisli$ if $R$ is deterministic~\cite{FurusawaKawaharaStruthTsumagari2017}.

The units of Peleg composition are given by the multirelations $\eta_X$; because of this, we henceforth also write $1_X$ for them.
It is important to note that Peleg composition is not associative -- only $(R \seq S) \seq T \subseteq R \seq (S \seq T)$ holds -- so that multirelations do not form a category under Peleg composition.
Yet it becomes associative if the third factor is univalent~\cite{FurusawaKawaharaStruthTsumagari2017}.

Peleg composition also preserves arbitrary unions in its first argument.
It therefore has a right adjoint~\cite{FurusawaGuttmannStruth2023a}
\begin{equation*}
  R \seq S = R S_\seq \subseteq T \Leftrightarrow R \subseteq T / S_\seq = T / S.
\end{equation*}
where residual notation has been overloaded.

Multirelational modal operators require multirelational tests.
These are subsets of $1_X$ for any $X$, which are related to relational tests by the isomorphism $(-) 1_X$ from relational tests to multirelational ones and its inverse $(-) 1_X^\converse$, which specialise $\Lambda$ and $\alpha$.
The boolean complement of a multirelational test $P$ is $\neg P = 1 - P$, overloading notation for the complement of relational tests.

The Peleg lifting of a multirelational test $P \subseteq 1_X$ is $P_\seq = \{ (A,A) \mid \forall a \in A .\ (a,\{a\}) \in P \}$, the power test of the isomorphic test below $\Id_X$.
This justifies overloading the power test and Peleg lifting operation.

\begin{lemma}[\cite{FurusawaKawaharaStruthTsumagari2017}]
  \label{lemma.seq-props}
  Let $R : X \rto \Pow Y$ and $P \subseteq \Id_X$.
  Then
  \begin{equation*}
    \dom(R_\seq) = \dom(R)_\seq, \qquad
    (P R)_\seq = P_\seq R_\seq, \qquad
    1_X P_\seq = P 1_X.
  \end{equation*}
\end{lemma}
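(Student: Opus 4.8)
The plan is to read each identity off the pointwise descriptions recorded above — the Peleg lifting $R_\seq = \{(A,B) \mid \exists f : X \to \Pow Y.\ f|_A \subseteq R \wedge B = \bigcup f(A)\}$, the Kleisli lifting, the power test $P_\seq = \{(A,A) \mid \forall a \in A.\ (a,a) \in P\}$, and the domain map — so that each claim collapses to a routine rearrangement of quantifiers. Algebraic derivations from the two-step definition $R_\seq = \dom(R)_\seq \bigcup_{S \subseteq_d R} S_\kleisli$ are also available and I indicate them, but the set-theoretic route is the most transparent here.

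\emph{Domain.} A pair $(A,A)$ lies in $\dom(R_\seq)$ iff $(A,B) \in R_\seq$ for some $B$, i.e.\ iff there is $f : X \to \Pow Y$ with $f|_A \subseteq R$ (the value $B = \bigcup f(A)$ being immaterial). Such an $f$ exists exactly when every $a \in A$ lies in $\dom(R)$: ``$\Rightarrow$'' is immediate from $f|_A \subseteq R$, and ``$\Leftarrow$'' picks outputs for the elements of $A$ and extends $f$ arbitrarily. This is precisely the membership condition for $\dom(R)_\seq$. Algebraically, $\dom(R_\seq) = \dom(R)_\seq\,\dom\bigl(\bigcup_{S \subseteq_d R} S_\kleisli\bigr)$ by $\dom(t Q) = t\,\dom(Q)$ for a test $t$ and distributivity of $\dom$ over unions, and the remaining factor is $\Id_{\Pow X}$ because each $S_\kleisli = \Pow(\alpha(S))$ is a relational image, hence deterministic and in particular total, and the family of such $S$ is nonempty.

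\emph{Pre-restriction by a test.} Unfolding, $(A,B) \in (PR)_\seq$ iff some $f$ has $f|_A \subseteq PR$ and $B = \bigcup f(A)$; and $f|_A \subseteq PR$ holds iff $(a,a) \in P$ for every $a \in A$ (write this $A \subseteq P$) together with $f|_A \subseteq R$. On the other hand $(A,B) \in P_\seq R_\seq$ forces the intermediate set to equal $A$ with $A \subseteq P$, followed by $(A,B) \in R_\seq$, i.e.\ $A \subseteq P$ and $f|_A \subseteq R$, $B = \bigcup f(A)$ for some $f$ — the same condition. An algebraic proof uses $\dom(PR) = P\,\dom(R)$, the fact that the power test sends composition of tests to composition of power tests, the identification of the univalent subrelations of $PR$ of full domain with the $P$-restrictions $PS$ of those $S \subseteq_d R$, and $\alpha(PS) = P\,\alpha(S)$ with functoriality of $\Pow$; its one delicate point is that the relational image $\Pow(P)$ of a test is not the power test $P_\seq$, so the final tidy-up leans on $P_\seq\,\dom(R)_\seq\,\Pow(P) = P_\seq\,\dom(R)_\seq$.

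\emph{The unit identity, and the obstacle.} Since $1_X = \eta_X = \{(a,\{a\}) \mid a \in X\}$ is a function, $(a,B) \in 1_X P_\seq$ iff $(\{a\},B) \in P_\seq$, i.e.\ iff $B = \{a\}$ and $(a,a) \in P$; and $(a,B) \in P 1_X$ iff $(a,a) \in P$ and $B = \{a\}$, so both sides equal $\{(a,\{a\}) \mid (a,a)\in P\}$. There is no real obstacle in this lemma — it is the kind of bookkeeping statement whose purpose is to let later calculations run smoothly; if anything needs care it is the second identity, where one must keep track of the fact that the Peleg lifting restricts its own domain so that the outer test $P$ is reflected faithfully (and, in the algebraic variant, the $\Pow(P)$ versus $P_\seq$ distinction).
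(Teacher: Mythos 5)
The paper does not prove this lemma at all; it is imported by citation from Furusawa, Kawahara, Struth and Tsumagari (2017), so there is no in-paper proof to compare against. Your pointwise verification from the stated expansions of $R_\seq$, $P_\seq$ and $\dom$ is correct on all three identities (including the choice argument needed for ``$\Leftarrow$'' in the domain case and the nonemptiness of $\{S \mid S \subseteq_d R\}$ in the algebraic variant), and serves as a sound self-contained substitute for the cited proof.
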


\begin{remark}
  To simplify notation, we often identify relational and multirelational tests with sets.
  A power test for $P$ is then just the powerset of $P$: $P_\seq = \Pow P$.
  In particular we may assume that $\dom(R) = \{ a \mid \exists b .\ R_{a,b} \}$ for all $R : X \rto \Pow Y$.
  We write, for instance, $A \subseteq P$ instead of $\Id_A \subseteq P$ and $A \cap P = \emptyset$ instead of $\Id_A \cap P = \emptyset$.
  We write $\neg$ for set complement.
\end{remark}

We finish this section with a brief discussion of concepts related to the inner structure of multirelations, which have been studied in detail in~\cite{FurusawaGuttmannStruth2023a}, based on previous work in~\cite{Rewitzky2003,FurusawaStruth2015a,FurusawaStruth2016}.
We will henceforth speak of inner and outer concepts for a clear distinction.

For $R, S : X \rto \Pow Y$, we define the \emph{inner union} $R \iu S = \{ (a,A \cup B) \mid R_{a,A} \wedge S_{a,B} \}$ with unit $\iuone = \{ (a,\emptyset) \mid a \in X \}$, the \emph{inner complementation} $\icpl{R} = R C = \{ (a,-A) \mid R_{a,A} \}$ and the set $\iuatoms = \{ (a,\{b\}) \mid a \in X \wedge b \in Y \}$ in $\Mult(X,Y)$ of \emph{atoms}.
The inner union is associative and commutative, but need not be idempotent.

Inner deterministic multirelations are once again important.
The multirelation $R : X \rto Y$ is \emph{inner total} if $R \subseteq -\iuone$, that is, $B$ is non-empty for each $(a,B) \in R$, \emph{inner univalent} if $R \subseteq \iuatoms \cup \iuone$, that is, $B$ is either a singleton or empty for each $(a,B) \in R$, and \emph{inner deterministic} if it is inner total and inner univalent, in which case $B \subseteq Y$ is a singleton set whenever $R_{a,B}$ for some $a \in X$.
We write $\nu(R)$ for the inner total part of $R$: those pairs in $R$ whose second component is not $\emptyset$, that is, $\nu(R) = R - \iuone$.

Finally, we need the following closures and preorders on the inner structure.
For $R : X \rto \Pow Y$, the \emph{up-closure} $\up{R} = R \Omega = \{ (a,A) \mid \exists (a,B) \in R .\ B \subseteq A \}$ with preorder $R \subs S \Leftrightarrow S \subseteq \up{R}$ and the \emph{down-closure} $\down{R} = R \Omega^\converse = \{ (a,A) \mid \exists (a,B) \in R .\ A \subseteq B \}$ with preorder $R \subh S \Leftrightarrow R \subseteq \down{S}$.
The up-closure and down-closure are related by inner duality.


\section{Deterministic Multirelations}
\label{section.determinism}

In this section, we discuss more advanced properties of deterministic multirelations that are important for reasoning with modal operators.

Recall that a \emph{quantaloid} is a category in which every homset forms a complete lattice and where arrow composition preserves arbitrary sups in both arguments.

The main result in~\cite{FurusawaGuttmannStruth2023b} on inner and outer deterministic multirelations is as follows.

\begin{proposition}[{\cite[Proposition 3.8]{FurusawaGuttmannStruth2023b}}]
  \label{proposition.peleg-rel}
  The inner deterministic multirelations with $\seq$, $1$, $\bigcup$ and the outer deterministic multirelations with $\seq$, $1$ and $\iU$ form quantaloids isomorphic to the quantaloid of binary relations.
\end{proposition}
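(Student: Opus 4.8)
The plan is to realise each of the two collections as the image of $\Rel$ under an explicit bijection on homsets that is an order isomorphism, preserves the Peleg unit $1$, and intertwines Peleg composition $\seq$ with relational composition. Both collections have the same objects (all sets) as $\Rel$, and a quantaloid isomorphism is exactly a functor that is bijective on objects, bijective and sup-preserving on homsets, and preserves composition and identities; so producing such homset maps suffices. The remaining quantaloid laws -- homsets being complete lattices, associativity, and composition distributing over arbitrary sups in both arguments -- then transfer verbatim from $\Rel$. In particular one gets for free, and should \emph{not} try to prove directly, that the two collections are closed under $\seq$ and under their stated sup operations and that $\seq$ is associative on them; recall $\seq$ is neither associative nor distributive over sups in its second argument on all of $\Mult$.

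\textbf{Outer deterministic multirelations.} These are exactly the total univalent relations $X \rto \Pow Y$, i.e.\ the functions $X \to \Pow Y$, and the power transpose $\Lambda : \Rel(X,Y) \to \Mult(X,Y)$, $\Lambda(P) = \{\,(a, P(a)) \mid a \in X\,\}$, is a bijection onto them with inverse $\alpha = (-){\ni}$, since $\Lambda$ and $(-){\ni_Y}$ are mutually inverse. Ordering the functions pointwise -- equivalently by inner union, since $f \iu g = g$ iff $f(a) \subseteq g(a)$ for all $a$ -- makes $\Lambda$ an order isomorphism carrying relational union to $\iU$ and $\emptyset_{X,Y}$ to $\iuone$, so the complete-lattice structure transported from $(\Rel(X,Y),\subseteq)$ is precisely the one with join $\iU$. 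It sends $\Id_X$ to $\eta_X = 1_X$. For composition, $\Lambda(Q)$ is a deterministic multirelation, so $\Lambda(Q)_\seq = \Lambda(Q)_\kleisli = \Pow(\alpha(\Lambda(Q))) = \Pow(Q)$, whence $\Lambda(P) \seq \Lambda(Q) = \Lambda(P)\,\Lambda(Q)_\seq = \Lambda(P)\,\Pow(Q) = \Lambda(PQ)$, the last step a one-line pointwise computation using $\Pow(Q) = \{(A, Q(A)) \mid A \subseteq Y\}$ (this says $\Lambda$ turns relational composition into Kleisli composition).

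\textbf{Inner deterministic multirelations.} Being inner total and inner univalent, these are exactly the subsets of $\iuatoms = \{\,(a,\{b\}) \mid a \in X, b \in Y\,\}$, so $P \mapsto P\,\eta_Y = \{\,(a,\{b\}) \mid P_{a,b}\,\}$ is an order isomorphism from $(\Rel(X,Y),\subseteq)$ onto $(\,\text{inner det.}\ X \rto \Pow Y,\ \subseteq\,)$, manifestly preserving arbitrary unions and $\emptyset$; its inverse is $\alpha$ (now $\alpha(R) = \{(a,b) \mid (a,\{b\}) \in R\}$), with $\alpha(P\eta_Y) = P\,\eta_Y{\ni_Y} = P$ using $\eta_Y{\ni_Y} = \Id_Y$. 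It sends $\Id_X$ to $\eta_X = 1_X$. For composition the shortcut above is unavailable, since $P\eta_Y$ is a deterministic multirelation only when $P$ is univalent; instead one unfolds the element description of $\seq$: in $\{\,(a,C) \mid \exists B.\ (P\eta_Y)_{a,B} \wedge \exists f.\ f|_B \subseteq Q\eta_Z \wedge C = \bigcup f(B)\,\}$ the witness $B$ is forced to be a singleton $\{b\}$ with $P_{a,b}$ and $f(b)$ a singleton $\{c\}$ with $Q_{b,c}$, so $C = \{c\}$ and the set collapses to $\{\,(a,\{c\}) \mid \exists b.\ P_{a,b} \wedge Q_{b,c}\,\} = (PQ)\eta_Z$. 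Equivalently one expands $(Q\eta_Z)_\seq = \dom(Q)_\seq\,\bigcup_{S \subseteq_d Q\eta_Z} S_\kleisli$ and notes that the univalent full-domain subrelations of $Q\eta_Z$ are exactly the $S\eta_Z$ with $S \subseteq_d Q$, whose Kleisli liftings are $\Pow(S)$.

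\textbf{Main obstacle.} The one genuinely computational point is preservation of $\seq$, and the harder instance is the inner deterministic one: there the relation is lifted to a \emph{non}-deterministic multirelation $P\eta_Y$, so $(-)_\seq$ does not collapse to $(-)_\kleisli$ and one must argue either with the full Peleg lifting (the sup over univalent subrelations of full domain) or with the expanded element formula for $\seq$. Two further points deserve care: since $\seq$ is not associative on $\Mult$, associativity on the two subcollections must be \emph{derived} from the bijection intertwining $\seq$ with relational composition rather than assumed; and distributivity of $\seq$ over arbitrary sups in its second argument, which fails on $\Mult$ in general, is likewise recovered only once the homset maps are known to be sup-preserving isomorphisms.
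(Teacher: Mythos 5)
Your proposal is correct and follows essentially the route the paper takes: the paper imports this result from the companion article and, in the paragraph following the proposition, identifies exactly your two isomorphisms ($\Lambda$ with inverse $\alpha$ for the outer deterministic case, $\eta = (-)1$ with inverse $\alpha$ for the inner deterministic case), with composition, units and sups transported as you describe. Your explicit verification that $\Lambda(P) \seq \Lambda(Q) = \Lambda(PQ)$ via the Kleisli lifting and that $P\eta \seq Q\eta = (PQ)\eta$ via the element formula, together with the observation that associativity and sup-distributivity must be \emph{derived} from the intertwining rather than assumed, fills in the details soundly.
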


For us, it is particularly important to note that $\Lambda$ is the isomorphism from the quantaloid $\Rel$ to the quantaloid of outer deterministic multirelations that sends relational composition to Peleg composition, each $\Id_X$ to $1_X$ and unions to inner unions, which are idempotent on deterministic multirelations~\cite[Lemma 3.6]{FurusawaGuttmannStruth2023a}.
Its inverse is $\alpha$.
Similarly, $\eta = (-) 1$ is the isomorphism from $\Rel$ to the quantaloid of inner deterministic multirelations and $\alpha$ is its inverse.
The following consequence of this fact is important for concurrent dynamic logic.

\begin{lemma}[{\cite{FurusawaGuttmannStruth2023b}}]
  \label{lemma.alpha-props}
  Let $R : X \rto \Pow Y$ and $S : Y \rto \Pow Z$.
  Then $\alpha(R \seq S) \subseteq \alpha(R) \alpha(S)$, and equality holds if $R$ and $S$ are inner or outer deterministic.
\end{lemma}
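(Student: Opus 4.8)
The plan is to trade everything for relation-algebraic calculation with the has-element relations and the Peleg lifting. Using $\alpha(R) = R{\ni}$ and $R \seq S = R S_\seq$, the two sides of the lemma become $\alpha(R \seq S) = R S_\seq \ni_Z$ and $\alpha(R)\alpha(S) = R \ni_Y S \ni_Z = R \ni_Y \alpha(S)$, so everything reduces to controlling $S_\seq \ni_Z$. The key fact I would isolate is the identity $S_\seq \ni_Z = \dom(S)_\seq \ni_Y S \ni_Z$. To prove it algebraically, expand $S_\seq = \dom(S)_\seq \bigcup_{S' \subseteq_d S} S'_\kleisli$, distribute composition with $\ni_Z$ over the union, use $S'_\kleisli \ni_Z = \Pow(\alpha(S'))\ni_Z = \ni_Y \alpha(S') = \ni_Y S' \ni_Z$ (the middle step being the instance $\Pow(T){\ni} = {\ni}T$ of the fact that $\alpha$ inverts $\Lambda$), and $\bigcup_{S' \subseteq_d S} S' = S$, since every pair of $S$ lies in a univalent $S' \subseteq S$ with $\dom S' = \dom S$. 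The same identity can also be read off the set-theoretic description of $R \seq S$ in Section~\ref{SS:multirelations}. The inclusion is then immediate: $\dom(S)_\seq \subseteq \Id_{\Pow Y}$ is a power test, so $S_\seq \ni_Z \subseteq \ni_Y S \ni_Z = \ni_Y \alpha(S)$, and precomposing with $R$ gives $\alpha(R \seq S) = R S_\seq \ni_Z \subseteq R \ni_Y \alpha(S) = \alpha(R)\alpha(S)$.

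For equality I would argue through Proposition~\ref{proposition.peleg-rel}. If $R$ and $S$ are both outer deterministic, then $\Lambda$ is a quantaloid isomorphism from $\Rel$ to the outer deterministic multirelations sending relational composition to Peleg composition, with inverse $\alpha$, so $R = \Lambda(\alpha(R))$, $S = \Lambda(\alpha(S))$ and $\alpha(R \seq S) = \alpha(\Lambda(\alpha(R)) \seq \Lambda(\alpha(S))) = \alpha(\Lambda(\alpha(R)\,\alpha(S))) = \alpha(R)\,\alpha(S)$; symmetrically, if $R$ and $S$ are both inner deterministic the same argument runs with $\eta = (-)1$ in place of $\Lambda$. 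One can also extract this from the identity above: if $S$ is outer deterministic it is outer total, so $\dom(S) = \Id_Y$ and $\dom(S)_\seq = \Id_{\Pow Y}$, turning the inclusion into equality; and if $R$ is inner deterministic then $R = \alpha(R)\,\eta_Y$, so by Lemma~\ref{lemma.seq-props} and $\eta_Y \ni_Y = \Id_Y$ we get $R\,\dom(S)_\seq\,\ni_Y = \alpha(R)\,\dom(S)\,\eta_Y\,\ni_Y = \alpha(R)\,\dom(S)$, whence $\alpha(R \seq S) = \alpha(R)\,\dom(S)\,S\,\ni_Z = \alpha(R)\,\alpha(S)$ because $\dom(S)\,S = S$. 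This also explains why the hypothesis asks for $R$ and $S$ to be \emph{both} inner or \emph{both} outer deterministic rather than one of each: for a mixed pair the argument collapses, e.g.\ with $R = \{(a,\{b_1,b_2\})\}$ (outer but not inner deterministic) and $S = \{(b_1,\{c\})\}$ (inner but not outer deterministic) one has $\alpha(R \seq S) = \emptyset \subsetneq \{(a,c)\} = \alpha(R)\alpha(S)$.

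The step I expect to be the main obstacle is the identity $S_\seq \ni_Z = \dom(S)_\seq \ni_Y S \ni_Z$: one has to unfold the Peleg lifting — both the union over univalent subrelations and the leading power test $\dom(S)_\seq$ — and track how each piece interacts with $\ni_Z$. Everything else is routine once one recognises that the $\dom(S)_\seq$ factor is precisely what blocks equality in general, and that it degenerates to $\Id_{\Pow Y}$ exactly when $S$ is outer total and commutes past $\eta_Y$ exactly when $R$ is inner deterministic.
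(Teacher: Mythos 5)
Your proof is correct. Note that the paper itself gives no proof of this lemma --- it is imported from the companion article --- so there is nothing here to compare against line by line; what can be said is that your argument is self-contained and consistent with the machinery the paper does record. The pivotal identity $S_\seq {\ni_Z} = \dom(S)_\seq {\ni_Y} S {\ni_Z}$ checks out: $S'_\kleisli {\ni_Z} = \Lambda({\ni_Y}\,\alpha(S'))\,{\ni_Z} = {\ni_Y}\,\alpha(S')$ is exactly Lemma~\ref{lemma.lambda-alpha-props}(1) applied to $\Pow$, composition distributes over the (choice-dependent but correct) union $\bigcup_{S' \subseteq_d S} S' = S$, and $\dom(S)_\seq \subseteq \Id_{\Pow Y}$ then gives the inclusion. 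For the equality claim, your first route is precisely the functoriality of $\Lambda$ and $\eta$ that the paper states after Proposition~\ref{proposition.peleg-rel}, and your second route in fact proves more than the lemma asserts: equality already holds when $S$ alone is total (killing the $\dom(S)_\seq$ factor) or when $R$ alone is inner deterministic (commuting $\eta_Y$ past the power test via Lemma~\ref{lemma.seq-props} and cancelling with $\alpha(\eta)=\Id$). This refinement is consistent with the paper's later remark that equality in Lemma~\ref{lemma.goldblatt-repair} holds ``if $S$ is total'', and your mixed-determinism counterexample $R=\{(a,\{b_1,b_2\})\}$, $S=\{(b_1,\{c\})\}$ is valid and usefully explains the shape of the hypothesis.
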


\begin{example}[{\cite{FurusawaGuttmannStruth2023b}}]
  \label{example.alpha-counter}
  For $R = \{ (a,\{a,b\}) \}$,
  \begin{equation*}
    \alpha(R \seq R) = \alpha(\emptyset) = \emptyset \subset \{ (a,a), (a,b) \} = \alpha(R) = \alpha(R) \alpha(R).
  \end{equation*}
\end{example}

Note also that the preorders $\subh$ and $\subs$ are partial orders on inner and outer deterministic multirelations.
They coincide on outer deterministic multirelations, and $\subh$ and $\sqsupseteq_\uparrow$ coincide with $\subseteq$ on inner deterministic multirelations~\cite[Proposition 5.8]{FurusawaGuttmannStruth2023a}.

In~\cite{FurusawaGuttmannStruth2023b}, we have also introduced determinisation maps for multirelations.
Let $R : X \rto \Pow Y$ be a multirelation.
The \emph{outer determinisation} operation $\fus = \Lambda \circ \alpha$ sends $R$ to the outer deterministic multirelation isomorphic to relation $\alpha(R)$.
The \emph{inner determinisation} operation $\fis = \eta \circ \alpha$ sends $R$ to the inner deterministic multirelation isomorphic to $\alpha(R)$.
They satisfy
\begin{equation*}
  \fus(R) = \{ (a,B) \mid B = \bigcup R(a) \} \qquad
  \text{ and } \qquad
  \fis(R) = \{ (a,\{b\}) \mid b \in \bigcup R(a) \}.
\end{equation*}

\begin{proposition}[{\cite[Corollary 3.10]{FurusawaGuttmannStruth2023b}}]
  \label{cor.det-cat2}
  The maps $\fis$ and $\fus$ are isomorphisms between the quantaloids of inner deterministic and outer deterministic multirelations.
\end{proposition}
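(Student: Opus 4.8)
The plan is to recognise $\fus$ and $\fis$, restricted to the relevant subquantaloids, as composites of quantaloid isomorphisms already available. Recall from the discussion following Proposition~\ref{proposition.peleg-rel} that $\Lambda$ is a quantaloid isomorphism from $\Rel$ (with union as sup) onto the outer deterministic multirelations (with inner union $\iU$ as sup), with inverse $\alpha$, and that $\eta = (-)1$ is a quantaloid isomorphism from $\Rel$ (with union as sup) onto the inner deterministic multirelations (again with union as sup), with inverse $\alpha$. Equivalently, $\alpha$ restricts to a quantaloid isomorphism from the inner deterministic multirelations onto $\Rel$, and to one from the outer deterministic multirelations onto $\Rel$.

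First I would note that $\fus = \Lambda \circ \alpha$, restricted to the inner deterministic multirelations, is the composite of the isomorphism $\alpha$ onto $\Rel$ followed by the isomorphism $\Lambda$ onto the outer deterministic multirelations, hence itself a quantaloid isomorphism from the inner onto the outer deterministic multirelations; symmetrically, $\fis = \eta \circ \alpha$, restricted to the outer deterministic multirelations, is the composite of $\alpha$ onto $\Rel$ followed by $\eta$ onto the inner deterministic multirelations, a quantaloid isomorphism in the opposite direction. Here the sup sides line up as required: $\alpha$ preserves all unions, being postcomposition with the fixed relation $\ni$, while $\Lambda$ carries unions to inner unions (and $\alpha$ inverts this on the outer deterministic side), so $\fus$ sends $\bigcup$ to $\iU$ and $\fis$ sends $\iU$ to $\bigcup$; preservation of $\seq$ and of the units $1_X$ follows from Lemma~\ref{lemma.alpha-props} together with the facts that $\Lambda$ and $\eta$ carry relational composition to Peleg composition and $\Id_X$ to $1_X$. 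One should also check that the globally defined maps $\fus$ and $\fis$ actually land in the claimed subquantaloids, which is immediate since $\Lambda(S)$ is always a function and $\eta(S) = S 1$ always has singleton second components.

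Finally I would verify that $\fus$ and $\fis$ are mutually inverse. On the inner deterministic multirelations, $\fis \circ \fus = \eta \circ (\alpha \circ \Lambda) \circ \alpha = \eta \circ \alpha$, which is the identity since $\eta$ is the inverse of $\alpha$ there; on the outer deterministic multirelations, $\fus \circ \fis = \Lambda \circ (\alpha \circ \eta) \circ \alpha = \Lambda \circ \alpha$, the identity since $\Lambda$ is the inverse of $\alpha$ there. Given how much of this has already been recorded, I do not expect a genuine obstacle: the only point needing care is the bookkeeping of which lattice operation is the quantaloid sup on each side --- union for $\Rel$ and for the inner deterministic multirelations, but inner union for the outer deterministic ones --- so that ``isomorphism of quantaloids'' is interpreted correctly throughout.
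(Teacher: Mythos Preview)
Your argument is correct and is exactly the expected one: since Proposition~\ref{proposition.peleg-rel} already exhibits $\alpha$ as a quantaloid isomorphism from each of the two subquantaloids onto $\Rel$ (with inverses $\eta$ and $\Lambda$, respectively), the composites $\fis=\eta\circ\alpha$ and $\fus=\Lambda\circ\alpha$ are automatically mutually inverse quantaloid isomorphisms between the two, and your bookkeeping of which sup ($\bigcup$ versus $\iU$) lives on which side is accurate. Note, however, that the paper does not prove this proposition at all: it is quoted verbatim as Corollary~3.10 of the companion article~\cite{FurusawaGuttmannStruth2023b}, so there is no in-paper proof to compare against; your reconstruction is precisely the derivation one would expect that corollary to record.
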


By functoriality, $\fis(R \seq S) = \fis(R) \seq \fis(S)$ if $R$, $S$ are outer deterministic and, similarly, $\fus(R \seq S) = \fus(R) \seq \fus(S)$ if $R$, $S$ are inner deterministic.
Moreover, the inner and outer deterministic multirelations are precisely the fixpoints of $\fis$ and $\fus$, respectively~\cite[Corollary 3.11]{FurusawaGuttmannStruth2023b}.

We end this section with a collection of properties that are helpful in proofs in the remaining sections.

\begin{lemma}
  \label{lemma.lambda-alpha-props}
  Let $R : X \to Y$, $S : Y \rto Z$ and let $f : X \to \Pow Y$ be outer deterministic.
  Then
  \begin{enumerate}
  \item $\alpha(\Lambda(R)) = R$ and $\Lambda(\alpha(f)) = f$,
  \item $f \Lambda(R) = \Lambda(f R)$ and $\Lambda(\ni_X) = \Id_{\Pow X}$.
  \item $\Lambda(R S) = \Lambda(R) \Pow(S)$,
  \item $\Lambda(f) = f \eta$,
  \item $\alpha(\eta) = \Id$.
  \end{enumerate}
\end{lemma}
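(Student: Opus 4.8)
The plan is to unfold the pointwise descriptions of $\Lambda$, $\alpha$, $\Pow(-)$ and $\eta$ recalled above, and to use the fact, recalled earlier in this section, that $\Lambda$ and $\alpha$ restrict to a mutually inverse pair between $\Rel(X,Y)$ and the outer deterministic multirelations in $\Mult(X,Y)$.

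First I would dispatch items~1 and~5. The identity $\alpha(\Lambda(R)) = R$ is just the statement that $\alpha \circ \Lambda$ is the identity on $\Rel(X,Y)$; concretely, $\Lambda(R)$ sends $a$ to the single set $R(a)$, so $\bigcup \Lambda(R)(a) = R(a)$ and hence $\alpha(\Lambda(R)) = \{ (a,b) \mid b \in R(a) \} = R$. For the second identity of item~1, since $f$ is outer deterministic it equals $\Lambda(g)$ for some $g \in \Rel(X,Y)$, whence $\Lambda(\alpha(f)) = \Lambda(\alpha(\Lambda(g))) = \Lambda(g) = f$ by the first identity. Item~5 is then immediate from $\eta_X = \Lambda(\Id_X)$: $\alpha(\eta_X) = \alpha(\Lambda(\Id_X)) = \Id_X$.

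Next I would handle item~4 and the first identity of item~2 by direct pointwise computation, using that an outer deterministic $f$ is total and univalent and so has a well-defined value at each argument. For a function $g : A \rto C$ one has $\Lambda(g) = g\, \eta_C$ (both relations pair each $a$ with the unique set $\{ c \}$ for which $(a,c) \in g$), and item~4 is the case $g = f$, $C = \Pow Y$. The same observation, with $g = f$ and where the composite $f R$ is type-correct, shows that both $f \Lambda(R)$ and $\Lambda(f R)$ send $a$ to the $R$-image of the value of $f$ at $a$, giving the first identity of item~2; the second, $\Lambda(\ni_X) = \Id_{\Pow X}$, follows from $\ni_X(A) = A$.

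Finally, item~3 follows algebraically from items~1 and~2. Using the definition $\Pow(S) = \Lambda(\ni_Y S)$ of the relational image functor, that $\Lambda(R)$ is outer deterministic, the first identity of item~2, associativity of relational composition, and $\Lambda(R)\ni_Y = \alpha(\Lambda(R)) = R$ from item~1, one computes
\[
  \Lambda(R)\, \Pow(S) = \Lambda(R)\, \Lambda(\ni_Y S) = \Lambda\bigl( (\Lambda(R)\ni_Y)\, S \bigr) = \Lambda(R S);
\]
alternatively, both sides send $a$ to the relational image under $S$ of the set $R(a)$. I expect no real obstacle here: the only point requiring care is the bookkeeping between the set-valued image $T(a)$ of a relation $T$ and the single value of a function, together with keeping the types of the various instances of $\in$, $\ni$, $\eta$ and $\Lambda$ straight. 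Note that the determinism hypothesis on $f$ is genuinely needed in items~1, 2 and~4, since $\Lambda(\alpha(f)) = f$ and $\Lambda(f) = f \eta$ both fail for non-univalent $f$.
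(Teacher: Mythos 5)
Your proof is correct. The paper states this lemma without proof (these are standard power-allegory facts, checked in the Isabelle development), so there is nothing to diverge from; your route --- pointwise unfolding of $\Lambda$, $\alpha$, $\eta$ and $\Pow$, using that $\Lambda$ and $\alpha$ are mutually inverse between $\Rel(X,Y)$ and the outer deterministic multirelations, and deriving item~3 algebraically from items~1 and~2 via $\Pow(S) = \Lambda({\ni_Y} S)$ --- is exactly the intended one, and your remark that the typing of item~2 must be read with $f R$ composable is the right way to handle the statement's generic types.
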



\section{Properties of Tests and Domain}
\label{section.tests-domain}

Before turning to modal operators, we collect properties of tests and domain.
Recall from Section~\ref{section.relation-multirelation} that we usually identify tests with sets and do not consider multirelational tests as subsets of $1$ explicitly.
More specifically, Peleg compositions $P \seq R$ and $R \seq P$ of multirelational tests $P$ and multirelations $R$ can always be replaced by relational compositions $\alpha(P) R$ and $R P_\seq$, respectively, where $\alpha(P) = P 1^\converse$ is the isomorphic relational test and $P_\seq$ the isomorphic power test of $P$.

\begin{lemma}
  \label{lemma.test-u}
  Let $R : X \rto \Pow Y$ and $P \subseteq Y$.
  Then
  \begin{enumerate}
  \item $R P_\seq = R \cap U P_\seq$ and $R \neg (P_\seq) = R - U P_\seq$,
  \item $(\down{1})_\seq P_\seq = \{ (A,B) \mid B \subseteq A \cap P \}$,
  \item $\down{(U P_\seq)} = U P_\seq$.
  \item $\nu(R P_\seq) = \nu(R) P_\seq$,
  \item $\alpha(R) \eta(P) = \fis(R) P_\seq = \fis(\down{R} P_\seq) = \fis(\nu(\down{R}) P_\seq)$.
  \end{enumerate}
\end{lemma}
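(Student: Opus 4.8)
The plan is to establish each of the five items by unfolding the set-theoretic descriptions collected in Section~\ref{section.relation-multirelation}, using the elementwise characterisations of $P_\seq$, $\alpha$, $\fis$, $\down{(-)}$, $\nu$ and $\eta$ that have already been recorded. For item~(1), I would use the description $P_\seq = ({\in_X} \backslash P U) \cap \Id_{\Pow X} = \{(B,B) \mid B \subseteq P\}$, so that a pair $(a,B) \in R P_\seq$ iff $(a,B) \in R$ and $B \subseteq P$; the inclusion $R P_\seq \subseteq R$ gives $R P_\seq = R \cap R P_\seq$, and then the computation $R P_\seq = R \cap U P_\seq$ follows since $U P_\seq = \{(a,B) \mid B \subseteq P\}$ captures exactly the side condition. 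The complement version is then immediate from $R \neg(P_\seq) = R - R P_\seq$ together with $\neg(P_\seq) = \Id_{\Pow Y} - P_\seq$ and the first equality. Item~(3) is a direct computation: $U P_\seq = \{(a,B) \mid B \subseteq P\}$ is already down-closed in its second argument, so postcomposing with $\Omega^\converse$ adds nothing.

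For items~(2) and (4), I would again argue elementwise. For (2), unfold $(\down{1})_\seq$ using $\down{1} = 1\Omega^\converse = \{(a,A) \mid A \subseteq \{a\}\} = \{(a,\emptyset)\} \cup \{(a,\{a\})\}$ and then apply the Peleg-lifting description $R_\seq = \{(A,B) \mid \exists f.\ f|_A \subseteq R \wedge B = \bigcup f(A)\}$; a function $f$ below $\down{1}$ assigns to each $a \in A$ either $\emptyset$ or $\{a\}$, so $\bigcup f(A)$ ranges exactly over the subsets of $A$, giving $(\down{1})_\seq = \{(A,B) \mid B \subseteq A\}$, and postcomposing with $P_\seq$ adds the constraint $B \subseteq P$. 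For (4), recall $\nu(R) = R - \iuone$ removes exactly the pairs with empty second component; since postcomposition with $P_\seq \subseteq \Id_{\Pow Y}$ neither creates nor destroys empty second components (it only restricts), $\nu$ commutes with $(-)P_\seq$, which can be checked directly on pairs or derived from (1) via $\nu(R) = R - U\iuone$ and distributivity of intersection.

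Item~(5) is the substantive one and I expect it to be the main obstacle, since it chains four expressions and mixes the relational operator $\eta$ on tests with the multirelational ones. I would prove it by computing the set of pairs in each expression. For $\alpha(R)\eta(P)$: $\alpha(R) = \{(a,b) \mid b \in \bigcup R(a)\}$ and $\eta(P) = \{(p,\{p\}) \mid p \in P\}$, so the composite is $\{(a,\{b\}) \mid b \in \bigcup R(a) \wedge b \in P\}$. For $\fis(R)P_\seq$: using $\fis(R) = \{(a,\{b\}) \mid b \in \bigcup R(a)\}$ and the characterisation from (1) that postcomposition with $P_\seq$ keeps exactly the pairs whose (singleton) second component lies in $P$, i.e.\ $\{b\} \subseteq P$, we get the same set. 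For $\fis(\down{R}P_\seq)$: here I would note $\down{R} = R\Omega^\converse$, so $\bigcup(\down{R})(a) = \bigcup\{A \mid \exists(a,B)\in R.\ A \subseteq B\} = \bigcup R(a)$; then $\down{R}P_\seq$ restricts the second components to subsets of $P$, and applying $\fis$ — which takes unions of second components — yields $\{(a,\{b\}) \mid b \in \bigcup\{A \subseteq P \mid \exists(a,B)\in R.\ A \subseteq B\}\}$, and this union is $\bigcup R(a) \cap P$, matching. The final equality $\fis(\down{R}P_\seq) = \fis(\nu(\down{R})P_\seq)$ uses that $\fis$ is insensitive to the empty second component (it forms $\bigcup R(a)$, to which an empty set contributes nothing), together with (4) to push $\nu$ outside; alternatively, $\down{R}$ and $\nu(\down{R})$ have the same non-empty second components and hence the same union per source point. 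The care needed is just in tracking that the various $\bigcup$-operations (inside $\alpha$, $\fis$ and the down-closure) all collapse to $\bigcup R(a)$, possibly intersected with $P$.
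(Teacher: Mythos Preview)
Your proposal is correct; every item checks out at the elementwise level, including the chain in~(5). The route, however, is genuinely different from the paper's. The paper dispatches (1) and (2) as you do, but for (3)--(5) it argues almost entirely in point-free relation-algebraic style: (3) is reduced via residuation to $P_\seq\,{\ni} \subseteq U P$ using $U P_\seq = U P / {\ni}$ and $\Omega^\converse\,{\ni} = {\ni}$; (4) is a one-line Boolean rewrite using (1); and (5) is obtained from the identities $P\,1 = 1\,P_\seq$ (Lemma~\ref{lemma.seq-props}), the characterisation $\fis(R) = \down{R} \cap \iuatoms$, the down-closure of $U P_\seq$ from (3), and the law $\fis \circ \nu = \fis$ from the companion paper, without ever unpacking pairs. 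Your elementwise computations are shorter to verify and self-contained, while the paper's algebraic derivations are in keeping with its stated aim of working ``by and large through algebraic proofs'' with a view towards axiomatic (allegorical) generalisation; in particular, the paper's proof of~(3) and~(5) would survive in an abstract power allegory where one cannot chase elements.
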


\begin{proof}
  Item (1) is standard in relation algebra, (2) follows from expanding definitions.

  For (3), it suffices to show that $P_\seq \Omega^\converse \subseteq U P_\seq$.
  Since
  \begin{equation*}
    U P_\seq = U ({\in} \backslash P U \cap \Id) = ({\in} \backslash P U)^\converse = U P / {\ni},
  \end{equation*}
  this follows from $P_\seq \Omega^\converse {\ni} \subseteq U P$.
  Since $\Omega^\converse {\ni} = {\ni}$, it remains to show $P_\seq {\ni} \subseteq U P$.
  The latter follows from ${\in} P_\seq \subseteq {\in} ({\in} \backslash P U) = P U$.

  For (4), $\nu(R P_\seq) = R \cap U P_\seq \cap - \iuone = (R - \iuone) \cap U P_\seq = \nu(R) P_\seq$.

  For (5), $\alpha(R) P 1_Y = \alpha(R) 1_X P_\seq = \fis(R) P_\seq$ yields the first step, expanding definitions.
  For the second one, $\fis(R) P_\seq = \down{R} \cap U P_\seq \cap \iuatoms = \down{(\down{R} \cap U P_\seq)} \cap \iuatoms = \fis(\down{R} P_\seq)$ because $U P_\seq$ is down-closed by (3) and intersections of down-closed sets are down-closed, by general properties of closure operators.
  For the last one, $\fis(\down{R} P_\seq) = \fis(\nu(\down{R} P_\seq)) = \fis(\nu(\down{R}) P_\seq) $, using $\fis \circ \nu = \fis$~\cite[Lemma 4.1(2)]{FurusawaGuttmannStruth2023b} in the first and (4) in the last step.
\end{proof}

Tests $P \subseteq \Id_X$ obviously form a boolean algebra with complementation $\neg P = \Id_X - P$.
Tests $P_\seq \subseteq \Id_{\Pow X}$ thus form a boolean algebra with $\neg (P_\seq) = \Id_{\Pow X} - P_\seq$.

Domain operations on algebras of multirelations have been studied in~\cite{FurusawaStruth2015a,FurusawaStruth2016}.
The standard definition for relations in $X \rto Y$ in Section~\ref{section.relation-multirelation} covers multirelations.

Recall that $\dom(R_\seq) = \dom(R)_\seq$ by Lemma~\ref{lemma.seq-props}; also note that $\dom(P) = P = \dom(\eta(P))$ for tests $P$.
Another important property of domain is \emph{domain locality}~\cite{FurusawaStruth2015a},
\begin{equation*}
  \dom(R S_\seq) = \dom(R \, \dom(S)_\seq),
\end{equation*}
in particular for the definition of modal operators in Section~\ref{section.modal}.
Further, $\dom(P R_\seq) = P \cap \dom(R)$, and, set-theoretically,
\begin{equation*}
  \dom(R P_\seq) = \{ a \mid \exists B .\ R_{a,B} \wedge B \subseteq P \} \qquad
  \text{and} \qquad
  \dom(R_\seq) = \{ A \mid \forall a \in A .\ \exists B .\ R_{a,B} \}.
\end{equation*}

\begin{lemma}
  \label{lemma.dom-det}
  Let $R : X \rto \Pow Y$ and $P \subseteq Y$.
  Then
  \begin{enumerate}
  \item $\dom(\fis(R)) = \dom(\alpha(R)) = \dom(\alpha(\nu(R))) = \dom(\nu(R)) = \dom(\nu(\fus(R)))$,
  \item $\dom(\alpha(R) P) = \dom(\nu(\down{R}) P_\seq)$ and $\dom(\alpha(R) \neg P) = \dom(R \neg (P_\seq))$.
  \end{enumerate}
\end{lemma}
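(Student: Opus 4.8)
The plan is to reduce both items to the single quantity $\dom(\alpha(R))$, using the factorisations $\fis = \eta \circ \alpha$ and $\fus = \Lambda \circ \alpha$, the expression $\alpha = (-){\ni}$, and the relational domain-locality law $\dom(S T) = \dom(S\,\dom(T))$. The one new ingredient, which I would prove first, is the \emph{auxiliary identity} $S\,\dom({\ni_Y}) = \nu(S)$ for every $S : X \rto \Pow Y$: since $\dom({\ni_Y}) = \Id_{\Pow Y} - \emptyset_\seq$, where $\emptyset_\seq = \{(\emptyset,\emptyset)\}$ is the power test of the empty test (the test of nonempty subsets of $Y$, as ${\ni_Y}$ relates $B$ to its elements), and $U\,\emptyset_\seq = \iuone$, Lemma~\ref{lemma.test-u}(1) gives $S\,\dom({\ni_Y}) = S - S\,\emptyset_\seq = S - (S \cap U\,\emptyset_\seq) = S - \iuone = \nu(S)$.

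For item (1): since $\fis(R) = \alpha(R)\,1_Y$ and $1_Y = \eta_Y$ is a function, hence total, $\dom(\fis(R)) = \dom(\alpha(R)\,1_Y) = \dom(\alpha(R))$. By domain locality and the auxiliary identity, $\dom(\alpha(R)) = \dom(R{\ni_Y}) = \dom(R\,\dom({\ni_Y})) = \dom(\nu(R))$; applying the same chain to $\nu(R)$ and using idempotence of $\nu$ gives $\dom(\alpha(\nu(R))) = \dom(\nu(R))$, or more directly $\alpha(\nu(R)) = \nu(R){\ni_Y} = R\,\dom({\ni_Y})\,{\ni_Y} = R{\ni_Y} = \alpha(R)$. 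Finally $\fus(R) = \Lambda(\alpha(R))$, so the auxiliary identity yields $\nu(\fus(R)) = \Lambda(\alpha(R))\,\dom({\ni_Y})$ and hence $\dom(\nu(\fus(R))) = \dom(\Lambda(\alpha(R))\,{\ni_Y}) = \dom(\alpha(\Lambda(\alpha(R)))) = \dom(\alpha(R))$ by domain locality and Lemma~\ref{lemma.lambda-alpha-props}(1).

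For item (2): for the first equation, totality of $1_Y$ gives $\dom(\alpha(R) P) = \dom(\alpha(R) P\,1_Y) = \dom(\alpha(R)\,\eta(P))$, and Lemma~\ref{lemma.test-u}(5) rewrites $\alpha(R)\,\eta(P) = \fis(\nu(\down R)\,P_\seq)$; since item (1) gives $\dom(\fis(T)) = \dom(\nu(T))$ for all $T$, taking $T = \nu(\down R)\,P_\seq$ and noting $\nu(\nu(\down R)\,P_\seq) = \nu(\down R)\,P_\seq$ (Lemma~\ref{lemma.test-u}(4) and idempotence of $\nu$) we conclude $\dom(\alpha(R) P) = \dom(\nu(\down R)\,P_\seq)$. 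For the second equation, domain locality gives $\dom(\alpha(R)\,\neg P) = \dom(R{\ni_Y}\,\neg P) = \dom(R\,\dom({\ni_Y}\,\neg P))$, so it suffices to show $\dom({\ni_Y}\,\neg P) = \neg(P_\seq)$; from $P_\seq = ({\in_Y}\backslash P U)\cap\Id_{\Pow Y}$, the residual law $T\backslash S = -(T^\converse(-S))$ applied with $T = {\in_Y}$ and $S = P U$, and $-(P U) = \neg P\,U$, one gets $P_\seq = -({\ni_Y}\,\neg P\,U)\cap\Id_{\Pow Y}$, whence $\neg(P_\seq) = \Id_{\Pow Y} - P_\seq = \Id_{\Pow Y}\cap{\ni_Y}\,\neg P\,U = \dom({\ni_Y}\,\neg P)$.

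There is no deep step here; the main obstacle is organisational — establishing the two small membership-relation identities $S\,\dom({\ni_Y}) = \nu(S)$ and $\dom({\ni_Y}\,\neg P) = \neg(P_\seq)$ cleanly from the residual presentation of $P_\seq$ and Lemma~\ref{lemma.test-u}, keeping relation complements $-Q$ distinct from test complements $\neg Q = \Id - Q$, and applying domain locality and the $\fis$/$\fus$ factorisations at the correct types. Once these are in place the rest is a short, diagram-free calculation.
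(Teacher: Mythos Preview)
Your proof is correct and follows essentially the same route as the paper: both hinge on domain locality, the calculation $\dom({\ni_Y}\,\neg P) = \neg(P_\seq)$ from the residual form of $P_\seq$, and Lemma~\ref{lemma.test-u}(5) for the first half of item~(2). Your auxiliary identity $S\,\dom({\ni_Y}) = \nu(S)$ (via $\dom({\ni_Y}) = \neg(\emptyset_\seq)$) is a mild repackaging of the paper's computation $\dom({\ni}) = \dom(-\iuone^\converse)$, and your treatment of $\dom(\nu(\fus(R)))$ via $\alpha\circ\Lambda = \Id$ in place of the paper's appeal to $\fis\circ\fus = \fis$ is an inessential variation.
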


\begin{proof}
  For (1), first $\dom(\fis(R)) = \dom(\alpha(R) 1) = \dom(\alpha(R) \dom(1)) = \dom(\alpha(R))$.
  Second, we obtain $\dom(\alpha(R)) = \dom(\alpha(\nu(R)))$ because $\alpha = \alpha \circ \nu$~\cite[Lemma 4.1(1)]{FurusawaGuttmannStruth2023b}.
  Third,
  \begin{equation*}
    \dom({\ni}) = \dom({\ni} \dom(U)) = \dom({\ni} U) = \dom(-\iuone^\converse)
  \end{equation*}
  since $U {\in} = -\iuone$.
  Hence
  \begin{align*}
        \dom(\alpha(\nu(R)))
    & = \dom(\nu(R) \dom({\ni})) \\
    & = \dom(\nu(R) \dom(-\iuone^\converse)) \\
    & = \dom(\nu(R) (-\iuone^\converse)) \\
    & = \dom((\nu(R) - \iuone) U) \\
    & = \dom(\nu(R)).
  \end{align*}
  Fourth, $\dom(\nu(\fus(R))) = \dom(\fis(\fus(R))) = \dom(\fis(R))$ using the previous identities and the fact that $\fis \circ \fus = \fis$~\cite[Lemma 3.17]{FurusawaGuttmannStruth2023b}.

  For the first equality in (2), we have
  \begin{equation*}
      \dom(\alpha(R) P)
    = \dom(\alpha(R) P 1)
    = \dom(\fis(\down{R} P_\seq))
    = \dom(\nu(\down{R} P_\seq))
    = \dom(\nu(\down{R}) P_\seq)
  \end{equation*}
  using Lemma~\ref{lemma.test-u} and part (1) of this lemma.
  For the second,
  \begin{equation*}
    \neg \dom({\ni} \neg P) = \Id - ({\ni} \neg P U) = \Id - ({\ni} - P U) = \Id \cap ({\in} \backslash P U) = P_\seq,
  \end{equation*}
  hence $\dom(\alpha(R) \neg P) = \dom(R \, \dom({\ni} \neg P)) = \dom(R \neg (P_\seq))$.
\end{proof}


\section{Modal Operators}
\label{section.modal}

We now turn to modal box and diamond operators for multirelations, but first recall the standard relational modalities.

\subsection{Relational modal operators}

The backward relational diamond operator $\langle - | : (X \rto Y) \to (\Pow X \to \Pow Y)$ can be obtained directly from the relational image operation: $\langle R | P = \cod(P R)$ for all $R : X \rto Y$ and $P \subseteq X$.
Its opposite forward relational diamond operator $| - \rangle : (X \rto Y) \to (\Pow Y \to \Pow X)$ is given as $| R \rangle = \langle R^\converse |$ so that $| R \rangle Q = \dom(R Q)$ for all $R : X \to Y$ and $Q \subseteq Y$.
Forward and backward box operators $| - ] : (X \rto Y) \to (\Pow Y \to \Pow X)$ and $[ - | : (X \rto Y) \to (\Pow Y \to \Pow X)$ can be obtained from $| - \rangle$ and $\langle - |$ by De Morgan duality $| R ] = \neg \circ | R \rangle \circ \neg$ and $[ R | = \neg \circ \langle R | \circ \neg$.
They are again related by opposition $| R ] = [ R^\converse |$.

Using nondeterministic functions -- arrows of the Kleisli category $\Set_\Pow$ of the powerset monad in $\Set$ -- instead, $\langle - | = (-)_K$, the Kleisli extension of such arrows.
The left triangle in the diagram below thus commutes, where $\FF : \Rel \to \Set_\Pow$ and $\RR : \Set_\Pow \to \Rel$ indicate the isomorphism between the two categories.
\begin{equation*}
  \begin{tikzcd}
    \Pow X \to \Pow Y \arrow[rr, "\GG"] && \Pow X \rto \Pow Y \\
    & X \rto Y \arrow[ul, swap, "\langle - |"] \arrow[ur, "\Pow"] \arrow[dl, bend left=10, "\FF"] \arrow[dr, bend left=10, "\Lambda"] & \\
    X \to \Pow Y \arrow[rr, swap, "\GG"] \arrow[uu, "(-)_K"] \arrow[ur, bend left=10, "\RR"] & & X \rto \Pow Y \arrow[uu, swap, "(-)_\Pow"] \arrow[ul, bend left=10, "\alpha"]
  \end{tikzcd}
\end{equation*}

The graph functor $\GG : \Set \to \Rel$ translates this diagram to $\Rel$.
The relational image operator $\Pow$ and the Kleisli lifting $(-)_\Pow$ in $\Rel$ now play the role of $\langle - |$ and $(-)_K$ in $\Set$.
The Kleisli extension $(-)_K$ is actually an isomorphism between $\Set_\Pow$ and the category of backward diamond operators (with inverse given by postcomposition with the unit of the powerset monad).
It preserves the entire quantaloid structure in $\Set_\Pow$ that comes from $\Rel$.
The standard definition of the Kleisli lifting translates to multirelations: $R_\kleisli = \Pow(R) \mu$~\cite{FurusawaGuttmannStruth2023b}.
Therefore, and of course owing to the composition of isomorphisms along the other outer faces of our diagram, we obtain a corresponding isomorphism between categories of outer deterministic multirelations and their Kleisli liftings on the right of this diagram.
All arrows in the diagram are therefore isomorphisms between categories.

In the following, we focus on the forward operators $| - \rangle$ and $| - ]$ and use the more conventional notation $\fdiar{-}$ and $\fboxr{-}$.
Hence, for $R : X \rto Y$ and $P \subseteq Y$,
\begin{equation*}
  \fdiar{R} Q = \dom(R Q) = \{ a \mid \exists b \in Q .\ R_{a,b} \} \qquad
  \text{ and } \qquad
  \fboxr{R} Q = \neg \dom(R \neg Q) = \{ a \mid \forall b \in Q .\ R_{a,b} \}.
\end{equation*}

For backward box and diamond operators we simply write $\fdiar{R^\converse}$ and $\fboxr{R^\converse}$.
It follows that $\fdiar{R^\converse} P = \{ b \mid \exists a \in P .\ R_{a,b} \}$ and $\fboxr{R^\converse} P = \{ b \mid \forall a \in P .\ R_{a,b} \}$.
This describes once again the left of the above diagram.

On the right of the diagram above it is routine to check that the graphs of $\fdiar{R} Q$ and $\fboxr{R} P$ are given by
\begin{align*}
  \Pow(R^\converse) = \{ (Q,P) \mid P = \fdiar{R} Q \} \qquad
  \text{ and } \qquad
  \Lambda({\ni} / R) = \{ (Q,P) \mid P = \fboxr{R} Q \}.
\end{align*}
We henceforth write $\gfdiar{R} = \Pow(R^\converse)$ and $\gfboxr{R} = \Lambda({\ni} / R)$, where the superscript ${}^\GG$ indicates that these operators are graphs of $\fdiar{R}$ and $\fboxr{R}$.
The fact that $\gfboxr{R}$ is an allegorical box operator in $\Rel$ is already known~\cite{BirdMoor1997}.

\subsection{Multirelational modal operators}
\label{subsection.new-diamond}

Two extensions of relational modalities to multirelational ones have been proposed.
The first replaces relational composition by Peleg composition in relational image operations and their De Morgan duals.
It has been advocated by Peleg in concurrent dynamic logic~\cite{Peleg1987} and studied further in~\cite{FurusawaStruth2015a,FurusawaStruth2016}.
The second has been proposed implicitly by Nerode and Wijesekera~\cite{NerodeWijesekera1990} and studied further in~\cite{Goldblatt1992}.
In the language of power allegories, it uses $\alpha$ to approximate multirelations by relations in relational boxes and diamonds.
In the remainder of this section we formalise and relate these approaches.
We also study the second approach in $\Rel$.

First we recall the multirelational modal operators $\fdiaast{-}, \fboxast{-} : (X \rto \Pow Y) \to (\Pow Y \to \Pow X)$ used in concurrent dynamic logic.
These are defined, for all $R : X \rto \Pow Y$ and $P \subseteq Y$, as
\begin{equation*}
  \fdiaast{R} P = \dom(R \seq P) \qquad
  \text{ and } \qquad
  \fboxast{R} P = \neg \dom(R \seq \neg P).
\end{equation*}
As already mentioned, these are essentially relational modalities with relational composition replaced by Peleg composition.
Indeed, $\fdiaast{R} P = \fdiar{R} P_\seq$ and $\fboxast{R} P = \fboxr{R} \neg ((\neg P)_\seq)$.
Further, $\fdiaast{R} P = \neg \fboxast{R} \neg P$ and $\fboxast{R} P = \neg \fdiaast{R} \neg P$, and, unfolding definitions,
\begin{equation*}
  \fdiaast{R} P = \{ a \mid \exists B .\ R_{a,B} \wedge B \subseteq P \} \qquad
  \text{and} \qquad
  \fboxast{R} P = \{ a \mid \forall B .\ R_{a,B} \Rightarrow B \cap P \neq \emptyset \}.
\end{equation*}
Following Wijesekera and Nerode, we have already pointed out in the introduction that the semantics of $\fboxast{-}$ is thus quite different from that of its relational counterpart, and it fails to capture standard program correctness specifications that are usually associated with modal boxes.
Nerode and Wijesekera have therefore argued for an alternative multirelational box, to which we add a De Morgan dual diamond operator $\fdiaalp{-} : (X \rto \Pow Y) \to (\Pow Y \to \Pow X)$.
In the language of power allegories and for $R : X \rto \Pow Y$ and $P \subseteq Y$, the two operators are
\begin{equation*}
  \fboxalp{R} = \fboxr{\alpha(R)} \qquad
  \text{ and } \qquad
  \fdiaalp{R} = \fdiar{\alpha(R)}.
\end{equation*}
Unfolding definitions,
\begin{equation*}
  \fboxalp{R} P = \{ a \mid \forall B .\ R_{a,B} \Rightarrow B \subseteq P \} \qquad
  \text{ and } \qquad
  \fdiaalp{R} P = \{ a \mid \exists B .\ R_{a,B} \wedge B \cap P \neq \emptyset \},
\end{equation*}
which confirms in particular that $\fboxalp{R}P$ is consistent with Nerode and Wijesekera's definition.

These two operations satisfy again $\fdiaalp{R} P = \fdiar{\alpha(R)} P = \neg \fboxr{\alpha(R)} \neg P = \neg \fboxalp{R} \neg P$ and likewise $\fboxalp{R} P = \neg \fdiaalp{R} \neg P$.
In addition, Lemma~\ref{lemma.dom-det} implies that $\fboxalp{R} P = \neg \fdiar{R} \neg (P_\seq) = \fboxr{R} P_\seq$ and then De Morgan duality implies that $\fdiaalp{R} P = \fdiar{R} \neg ((\neg P)_\seq)$.

\subsection{Properties of multirelational modal operators}
\label{section.box-other}

Lemma~\ref{lemma.test-u} leads directly to a purely multirelational definition of $\fboxalp{-}$.

\begin{lemma}
  \label{lemma.box-U}
  Let $R : X \rto \Pow Y$ and $P \subseteq Y$.
  Then $\fboxalp{R} P = \neg \dom(R - U P_\seq)$.
\end{lemma}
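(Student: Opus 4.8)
The plan is to unfold the definition of $\fboxalp{-}$ through its relational counterpart and then translate the relational-composition expression into a purely multirelational one using the two auxiliary lemmas already proved. Concretely, I would start from $\fboxalp{R} P = \fboxr{\alpha(R)} P$ and expand the relational box as $\fboxr{\alpha(R)} P = \neg \dom(\alpha(R) \neg P)$, which is just the definition of the forward relational box operator applied to the relation $\alpha(R)$.

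The next step is to rewrite the domain term. By Lemma~\ref{lemma.dom-det}(2), namely the identity $\dom(\alpha(R) \neg P) = \dom(R \neg(P_\seq))$, I can replace the relational composition $\alpha(R) \neg P$ by the multirelational expression $R \neg(P_\seq)$ under the domain operation. Then Lemma~\ref{lemma.test-u}(1), which gives $R \neg(P_\seq) = R - U P_\seq$, turns this into $\neg \dom(R - U P_\seq)$, as required. In fact this chain has essentially been anticipated in the paragraph preceding the lemma, where it is noted that $\fboxalp{R} P = \neg \fdiar{R} \neg(P_\seq) = \fboxr{R} P_\seq$; one only needs to observe $\neg \fdiar{R} \neg(P_\seq) = \neg \dom(R \neg(P_\seq))$ and apply Lemma~\ref{lemma.test-u}(1).

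There is no real obstacle here: the statement is a direct corollary of Lemma~\ref{lemma.test-u} and Lemma~\ref{lemma.dom-det}, and the proof is a two- or three-line calculation. The only point requiring a little care is bookkeeping — making sure the correct clause of Lemma~\ref{lemma.test-u} (the relative-complement form $R - U P_\seq$, not the intersection form $R \cap U P_\seq$) is invoked, and that the complement $\neg$ in the relational box is the one matched by the $\neg(P_\seq)$ appearing in Lemma~\ref{lemma.dom-det}(2).
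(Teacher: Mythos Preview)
Your proposal is correct and matches the paper's intended argument. The paper gives no explicit proof, only the remark that Lemma~\ref{lemma.test-u} leads directly to the result; combined with the identity $\fboxalp{R} P = \neg \fdiar{R} \neg(P_\seq)$ established in the preceding paragraph via Lemma~\ref{lemma.dom-det}, this is exactly the chain you spell out.
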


Further, $\fboxalp{-}$ and $\fdiaalp{-}$ can be defined in various ways in terms of other modalities.

\begin{lemma}
  \label{lemma.box-nu}
  Let $R : X \rto \Pow Y$ and $P \subseteq Y$.
  Then
  \begin{enumerate}
  \item $\fboxalp{R} = \fboxast{\nu(\down{R})} = \fboxast{\fis(R)} = \fboxalp{\fus(R)} = \fboxalp{\fis(R)} = \fdiaalp{\fus(R)}$,
  \item $\fdiaalp{R} = \fdiaast{\nu(\down{R})} = \fdiaast{\fis(R)} = \fdiaalp{\fus(R)} = \fdiaalp{\fis(R)} = \fboxalp{\fus(R)}$.
  \end{enumerate}
\end{lemma}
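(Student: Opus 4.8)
The plan is to prove part~(2) and deduce part~(1) from it by De~Morgan duality. Each operator appearing in the two chains satisfies $\fboxalp{S} = \neg \circ \fdiaalp{S} \circ \neg$ and $\fboxast{S} = \neg \circ \fdiaast{S} \circ \neg$, so conjugating the whole chain in~(2) by the map $P \mapsto \neg P$ turns each of its diamonds into the corresponding box and its final box back into a diamond, reproducing exactly the chain in~(1). It therefore suffices to show that every term in~(2) equals the single expression $\fdiaalp{R} Q = \fdiar{\alpha(R)} Q = \dom(\alpha(R) Q)$.

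The two $\alpha$-subscripted terms are immediate: $\fus(R)$ and $\fis(R)$ are by construction the outer- and inner-deterministic multirelations with the same $\alpha$-image as $R$, i.e.\ $\alpha \circ \fus = \alpha = \alpha \circ \fis$ (from $\fus = \Lambda \circ \alpha$, $\fis = \eta \circ \alpha$ and $\alpha \circ \Lambda = \Id$, $\alpha(\eta) = \Id$ in Lemma~\ref{lemma.lambda-alpha-props}), so $\fdiaalp{\fus(R)} = \fdiar{\alpha(R)} = \fdiaalp{\fis(R)}$. For the Peleg-composition terms I would pass through the domain identities of Section~\ref{section.tests-domain}: writing $\fdiaast{S} Q = \dom(S \seq Q) = \dom(S Q_\seq)$, Lemma~\ref{lemma.test-u}(5) gives $\fis(R) Q_\seq = \alpha(R) \eta(Q)$, so $\fdiaast{\fis(R)} Q = \dom(\alpha(R) \eta(Q)) = \dom(\alpha(R) Q)$ by domain locality and $\dom(\eta(Q)) = Q$, while Lemma~\ref{lemma.dom-det}(2) gives directly $\dom(\nu(\down{R}) Q_\seq) = \dom(\alpha(R) Q)$. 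The remaining term pairs the Peleg modality with the outer determinisation $\fus(R)$; here I would use that $\fus(R) = \Lambda(\alpha(R))$ is a total univalent relation $X \rto \Pow Y$, so complementation commutes with its domain, $\neg\dom(\fus(R)\, t) = \dom(\fus(R)\, \neg t)$ for every test $t$ on $\Pow Y$. Unfolding the Peleg modality to $\neg\dom(\fus(R) (\neg Q)_\seq)$, moving the complement through $\dom$, then applying Lemma~\ref{lemma.dom-det}(2) to $\fus(R)$ together with $\alpha(\fus(R)) = \alpha(R)$, returns $\dom(\alpha(R) Q)$ once more.

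The delicate point, conceptual rather than computational, is the last equality in each part: it equates a box-type operator with a diamond-type operator. This looks wrong but is forced by the fact that determinisation trivialises the angelic layer of a multirelation: on an outer-deterministic total multirelation the choice over image sets is vacuous, so the quantifiers $\forall B$ and $\exists B$ both degenerate to conditions on the single image set $\bigcup R(a)$ — exactly what the $\alpha$-modalities compute. Once Lemmas~\ref{lemma.test-u}(5) and~\ref{lemma.dom-det}(2), which carry the real work of Section~\ref{section.tests-domain}, are granted, nothing else is hard: the argument is unfolding the definitions of the modal operators, applying De~Morgan duality, and using that $\fus$ and $\fis$ leave the $\alpha$-image of a multirelation unchanged.
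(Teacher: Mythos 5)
Your strategy coincides with the paper's, merely run in the opposite direction: the paper proves chain (1) link by link and obtains (2) by De Morgan duality, you prove (2) and dualise to (1). The workhorses are the same -- Lemma~\ref{lemma.test-u}(5), Lemma~\ref{lemma.dom-det}, relational domain locality with $\dom(\eta(Q)) = Q$, and $\alpha \circ \fus = \alpha = \alpha \circ \fis$ -- and your treatment of the first four equalities in each chain is correct.

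The final equality is where there is a genuine problem, though the fault lies partly with the printed statement. You unfold the last term of (2) as $\neg\dom(\fus(R)(\neg Q)_\seq)$ and call it ``the Peleg modality''; that expression is $\fboxast{\fus(R)} Q$, not the printed $\fboxalp{\fus(R)} Q$, which instead unfolds (via Lemma~\ref{lemma.dom-det}(2) and $\alpha(\fus(R)) = \alpha(R)$) to $\neg\dom(\fus(R)\,\neg(Q_\seq)) = \neg\dom(\alpha(R)\neg Q) = \fboxalp{R} Q$. So, read literally, your computation does not address the stated term. The catch is that the stated term cannot be right: since $\fboxalp{\fus(R)} = \fboxalp{R}$ (the fourth link), the printed last link of (2) would assert $\fdiaalp{R} = \fboxalp{R}$ for \emph{every} $R$, which already fails for $R = \emptyset$ on nonempty $X$. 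The intended final terms are $\fdiaast{\fus(R)}$ in (1) and $\fboxast{\fus(R)}$ in (2), consistent with Lemma~\ref{lemma.predtrans-3}(5,6) and with the paper's own derivation, which reaches $\fdiar{\fus(R)} P_\seq = \fdiaast{\fus(R)} P$ before committing the same subscript slip. Your calculation $\fboxast{\fus(R)} Q = \neg\dom(\fus(R)(\neg Q)_\seq) = \dom(\fus(R)\,\neg((\neg Q)_\seq)) = \dom(\alpha(R) Q)$ -- pushing $\neg$ through $\dom$ using that $\fus(R)$ is a total univalent relation, then applying Lemma~\ref{lemma.dom-det}(2) -- is a valid proof of the corrected identity. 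Two things you should make explicit: first, that you are proving the $\ast$-version and that the $\alpha$-version as printed is false; second, your informal ``degenerate quantifier'' justification only explains $\fboxast{\fus(R)} = \fdiaast{\fus(R)}$ and $\fboxalp{\fus(R)} = \fdiaalp{\fus(R)}$, whereas the equality you actually need crosses the $\ast$/$\alpha$ divide, where the inner conditions $B \subseteq Q$ versus $B \cap Q \neq \emptyset$ still differ and the real content is carried by $\bigcup R(a)$ being the single image set of $\fus(R)$.
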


\begin{proof}
  For (1), first $\fboxalp{R} P = \fboxr{\alpha(R)} P = \neg \dom(\nu(\down{R}) (\neg P)_\seq) = \fboxast{\nu(\down{R})} P$ using Lemma~\ref{lemma.dom-det}.
  Second, $\fboxalp{R} P = \neg \dom(\alpha(R) \neg P) = \neg \dom(\fis(R)(\neg P)_\seq)= \fboxr{\fis(R)} \neg ((\neg P)_\seq) = \fboxast{\fis(R)} P$ by Lemma~\ref{lemma.test-u}.
  The proofs of the third and fourth identity are trivial.
  Finally,
  \begin{equation*}
    \fboxalp{\fus(R)} P = \fboxr{\fus(R)} P_\seq = \fdiar{\fus(R)} P_\seq = \fdiaalp{\fus(R)} P
  \end{equation*}
  because $\fus(R)$ is outer deterministic, and the coincidence of relational boxes and diamonds is standard.

  For (2), first $\fdiaalp{R} P = \neg \fboxalp{R} \neg P = \neg \fdiaast{\nu(\down{R})} \neg P = \fdiaast{\nu(\down{R})} P$ using again Lemma~\ref{lemma.dom-det}.
  The remaining proofs follow by duality from (1).
\end{proof}

Moreover $\dom(R) = (R \seq \iuone) \iu 1$ and $\fis(R) = (R \ii U) \cap \iuatoms$, for $R \ii S = \icpl(\icpl{R} \iu \icpl{S})$, which yields another definition of $\fboxalp{-}$ in terms of multirelational operations and constants.

Lemma~\ref{lemma.box-nu} implies the following fact.

\begin{corollary}
  Let $R : X \rto \Pow Y$ and $P \subseteq Y$.
  Then $\fdiaast{\nu(R)} P \subseteq \fdiaalp{R} P$ and $\fboxast{R} P \subseteq \fboxalp{\nu(R)} P$.
\end{corollary}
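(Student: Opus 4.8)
The plan is to read both inclusions off Lemma~\ref{lemma.box-nu}, which rewrites the $\alpha$-modalities as Peleg modalities of the inner-total down-closure, together with the fact that $\fdiaast{-}$ is monotone and $\fboxast{-}$ is antitone in their multirelation argument.

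For the diamond inclusion I would start from $\fdiaalp{R} = \fdiaast{\nu(\down{R})}$, which is Lemma~\ref{lemma.box-nu}(2). Since $R \subseteq \down{R}$ gives $\nu(R) = R - \iuone \subseteq \down{R} - \iuone = \nu(\down{R})$, and $\fdiaast{S} P = \dom(S \seq P)$ is monotone in $S$ (Peleg composition preserves unions, hence is monotone, in its first argument, and $\dom$ is monotone), I obtain $\fdiaast{\nu(R)} P \subseteq \fdiaast{\nu(\down{R})} P = \fdiaalp{R} P$. This half is routine.

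For the box inclusion I would first simplify the right-hand side. Applying Lemma~\ref{lemma.box-nu}(1) to $\nu(R)$ gives $\fboxalp{\nu(R)} = \fboxast{\nu(\down{\nu(R)})}$, and $\nu(\down{\nu(R)}) = \nu(\down{R})$ because a nonempty set is contained in a member of $R$ exactly when it is contained in a nonempty member of $R$; hence $\fboxalp{\nu(R)} = \fboxast{\nu(\down{R})} = \fboxalp{R}$. Thus the stated inclusion reduces to $\fboxast{R} P \subseteq \fboxalp{R} P$.

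This reduced inclusion is exactly where I expect the plan to meet its main obstacle. Unfolding it would require that whenever every outcome set $B$ with $R_{a,B}$ meets $P$, each such $B$ is already contained in $P$; this is valid for singleton outcomes but fails for larger ones, so the printed direction really needs $R$ to be inner univalent. Because $\fboxast{-}$ is antitone, the argument mirroring the diamond case -- De Morgan duality applied to the diamond inclusion via $\fdiaast{\nu(R)} P = \neg \fboxast{\nu(R)} \neg P$ and $\fdiaalp{R} P = \neg \fboxalp{R} \neg P$ -- instead yields the companion inclusion $\fboxalp{R} P \subseteq \fboxast{\nu(R)} P$. A one-line check with $R = \{ (a, \{b,c\}) \}$ and $P = \{b\}$, for which $a \in \fboxast{R} P$ while $a \notin \fboxalp{\nu(R)} P$, confirms that the second inclusion cannot hold in the printed direction, so I would expect the intended statement to be the De Morgan companion $\fboxalp{R} P \subseteq \fboxast{\nu(R)} P$.
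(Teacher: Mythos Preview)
Your treatment of the diamond half is correct and is precisely the paper's route: the paper derives the corollary from Lemma~\ref{lemma.box-nu} together with monotonicity/antitonicity of the relational modalities, and your chain $\nu(R) \subseteq \nu(\down{R})$ combined with $\fdiaalp{R} = \fdiaast{\nu(\down{R})}$ is exactly that argument.

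You are also right about the box half: the inclusion $\fboxast{R} P \subseteq \fboxalp{\nu(R)} P$ is false as printed, and your counterexample $R = \{(a,\{b,c\})\}$, $P = \{b\}$ witnesses this (here $\nu(R)=R$, $a \in \fboxast{R} P$ because $\{b,c\}\cap P\neq\emptyset$, but $a \notin \fboxalp{\nu(R)} P$ because $\{b,c\}\not\subseteq P$). Your simplification $\fboxalp{\nu(R)} = \fboxalp{R}$ via $\nu(\down{\nu(R)}) = \nu(\down{R})$ is also valid. The paper's one-line justification --- antitonicity of boxes in the first argument together with $\fboxalp{R} = \fboxast{\nu(\down{R})}$ from Lemma~\ref{lemma.box-nu}(1) --- in fact yields
\[
  \fboxalp{R} P \;=\; \fboxast{\nu(\down{R})} P \;\subseteq\; \fboxast{\nu(R)} P,
\]
which is the De~Morgan companion you identify. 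So your diagnosis that the intended statement is $\fboxalp{R} P \subseteq \fboxast{\nu(R)} P$ is correct; the printed direction is an error in the paper, not a gap in your argument.
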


Its proof is immediate from the standard fact that relational diamonds preserve $\subseteq$ in both arguments, while relational boxes reverse this order in their first argument (and preserve it in their second one), and relational representations of multirelational modal operators.

Lemma~\ref{lemma.box-nu} also indicates situations where multirelational modalities coincide.
\begin{corollary}
  Let $R : X \rto \Pow Y$.
  Then
  \begin{enumerate}
  \item $\fboxalp{R} = \fboxast{R}$ and $\fdiaalp{R} = \fdiaast{R}$ if $R$ is inner deterministic,
  \item $\fboxalp{R} = \fdiaalp{R}$ and $\fboxast{R} = \fdiaast{R}$ if $R$ is outer deterministic.
  \end{enumerate}
\end{corollary}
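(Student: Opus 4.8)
The plan is to read off both items from Lemma~\ref{lemma.box-nu}, using the fact recalled after Proposition~\ref{cor.det-cat2} that the inner deterministic multirelations are precisely the fixpoints of $\fis$ and the outer deterministic ones precisely the fixpoints of $\fus$.

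For item~(1), suppose $R$ is inner deterministic, so that $\fis(R) = R$. Substituting this into the chains of Lemma~\ref{lemma.box-nu} immediately gives $\fboxalp{R} = \fboxast{\fis(R)} = \fboxast{R}$ from~(1) and $\fdiaalp{R} = \fdiaast{\fis(R)} = \fdiaast{R}$ from~(2); nothing else is required.

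For item~(2), suppose $R$ is outer deterministic, so that $\fus(R) = R$. For the first identity I would substitute $\fus(R) = R$ into the subchain $\fboxalp{\fus(R)} = \fdiaalp{\fus(R)}$ occurring inside Lemma~\ref{lemma.box-nu}(1), which yields $\fboxalp{R} = \fdiaalp{R}$ at once. For the second identity $\fboxast{R} = \fdiaast{R}$ I would argue as follows: an outer deterministic multirelation, regarded as a relation of type $X \rto \Pow Y$, is total and univalent, hence a function in $\Rel$, so the relational box and diamond agree on it, $\fboxr{R} = \fdiar{R}$ as transformers $\Pow(\Pow Y) \to \Pow X$. Feeding this into the relational representations $\fboxast{R} P = \fboxr{R} \neg((\neg P)_\seq)$ and $\fdiaast{R} P = \fdiar{R} P_\seq$ from Section~\ref{subsection.new-diamond}, and then unfolding what $\fdiar{R}$ does on these two power-test arguments for a univalent $R$, should reduce $\fboxast{R} = \fdiaast{R}$ to a set-level identity about the single value $R$ assigns to each point.

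I expect this last step to be the main obstacle: $\neg((\neg P)_\seq)$ and $P_\seq$ are genuinely different subsets of $\Pow Y$, so the coincidence of $\fboxr{R}$ and $\fdiar{R}$ on them has to be extracted purely from the special shape of an outer deterministic relation, and this is precisely the point where the argument needs the most care. If working directly proves awkward, the alternative I would try is to route through Lemma~\ref{lemma.box-nu} once more, rewriting $\fboxast{R}$ and $\fdiaast{R}$ for outer deterministic $R$ in the forms $\fboxalp{\fus(R)}$ and $\fdiaalp{\fus(R)}$ and closing the argument with the first identity of~(2).
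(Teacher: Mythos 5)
Item (1) of your proposal is correct and is exactly the paper's intended argument: inner deterministic multirelations are the fixpoints of $\fis$, so $\fboxalp{R} = \fboxast{\fis(R)} = \fboxast{R}$ and $\fdiaalp{R} = \fdiaast{\fis(R)} = \fdiaast{R}$ drop straight out of Lemma~\ref{lemma.box-nu}. The problem is item (2), and the step you flag as ``the main obstacle'' is not merely delicate --- it fails, because the identity $\fboxast{R} = \fdiaast{R}$ is false for outer deterministic $R$. Writing $B_a$ for the unique set with $R_{a,B_a}$, the pointwise descriptions give $\fdiaast{R} P = \{ a \mid B_a \subseteq P \}$ and $\fboxast{R} P = \{ a \mid B_a \cap P \neq \emptyset \}$, which differ as soon as some $B_a$ meets $P$ without being contained in it: for $R = \{ (a,\{a,b\}), (b,\{a,b\}) \}$ and $P = \{a\}$ one gets $\fdiaast{R} P = \emptyset$ but $\fboxast{R} P = \{a,b\}$; the same instance refutes $\fboxalp{R} = \fdiaalp{R}$. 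The structural reason is visible in the very representations you quote: $\fboxalp{R} P = \fboxr{R} P_\seq$ and $\fdiaast{R} P = \fdiar{R} P_\seq$ apply box and diamond to the \emph{same} argument $P_\seq$, while $\fdiaalp{R} P = \fdiar{R} \neg ((\neg P)_\seq)$ and $\fboxast{R} P = \fboxr{R} \neg ((\neg P)_\seq)$ share the argument $\neg ((\neg P)_\seq)$. Since $\fboxr{R} = \fdiar{R}$ does hold for outer deterministic $R$, what actually follows is the cross pairing $\fboxalp{R} = \fdiaast{R}$ and $\fdiaalp{R} = \fboxast{R}$, in agreement with Lemma~\ref{lemma.predtrans-3}(5)--(6) at the level of graphs.

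Your fallback route cannot rescue the statement either, because the link of Lemma~\ref{lemma.box-nu} you lean on for the first identity of (2), namely $\fboxalp{\fus(R)} = \fdiaalp{\fus(R)}$, carries the same box/diamond swap: the displayed computation in its proof ends with $\fdiar{\fus(R)} P_\seq$, which equals $\fdiaast{\fus(R)} P$, not $\fdiaalp{\fus(R)} P$. So substituting $\fus(R) = R$ there inherits the error rather than establishing anything. In short, only item (1) of the corollary is provable as printed; item (2) should read $\fboxalp{R} = \fdiaast{R}$ and $\fdiaalp{R} = \fboxast{R}$, and with that correction your substitution strategy goes through --- either via $\fus(R) = R$ and the corrected final links of Lemma~\ref{lemma.box-nu}, or directly by applying $\fboxr{R} = \fdiar{R}$ to the matching arguments $P_\seq$ and $\neg ((\neg P)_\seq)$ as above.
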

The proofs use properties of Lemma~\ref{lemma.box-nu} together with fixpoint properties of inner and outer deterministic multirelations and properties of $\fus$ and $\fis$ from~\cite{FurusawaGuttmannStruth2023a}.

\begin{remark}
  \label{remark.box-dia-spec}
  The multirelational boxes and diamonds specialise to relational ones.
  Let $R : X \rto Y$.
  Then $\fboxalp{\eta(R)} = \fboxr{\alpha(\eta(R))} = \fboxr{R}$ and $\fdiaalp{\eta(R)} = \fdiar{\eta(\Lambda(R))} = \fdiar{R}$ is a trivial consequence of the fact that $\eta$ and $\alpha$ form a bijective pair.
  Further, $\fdiaast{\eta(R)} P = \dom(R 1 P_\seq) = \dom(R P 1) = \dom(R P) = \fdiar{R} P$ using Lemma~\ref{lemma.seq-props}.
  The identity $\fboxast{\eta(R)} = \fboxr{R}$ then follows by duality.
\end{remark}

Here is another definition of $\fboxalp{-}$.

\begin{lemma}
  \label{lemma.box-fusion}
  Let $R : X \rto \Pow Y$ and $P \subseteq Y$.
  Then $\fboxalp{R} P = \neg \dom(\fus(R) \neg(P_\seq))$.
\end{lemma}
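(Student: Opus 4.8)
The plan is to reduce the claim to the outer deterministic multirelation $\fus(R)$ and then to invoke the relational reformulation of $\fboxalp{-}$ that was already recorded just before Lemma~\ref{lemma.box-U}.

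First I would use Lemma~\ref{lemma.box-nu}(1), which gives $\fboxalp{R} = \fboxalp{\fus(R)}$; alternatively this is immediate from $\alpha \circ \fus = \alpha$, since $\fus = \Lambda \circ \alpha$ and $\alpha(\Lambda(-)) = \Id$ by Lemma~\ref{lemma.lambda-alpha-props}(1), so that $\fboxalp{\fus(R)} = \fboxr{\alpha(\fus(R))} = \fboxr{\alpha(R)} = \fboxalp{R}$. It therefore suffices to establish $\fboxalp{S} P = \neg\dom(S\,\neg(P_\seq))$ with $S = \fus(R)$. For the second step I would note that this identity in fact holds for every multirelation $S : X \rto \Pow Y$: we already observed, as a consequence of Lemma~\ref{lemma.dom-det}, that $\fboxalp{S} P = \neg\fdiar{S}\neg(P_\seq) = \neg\dom(S\,\neg(P_\seq))$. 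Chaining the two steps yields $\fboxalp{R} P = \fboxalp{\fus(R)} P = \neg\dom(\fus(R)\,\neg(P_\seq))$, as required.

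I do not anticipate a genuine obstacle here: the statement is essentially a repackaging of $\fboxalp{R} P = \neg\dom(R\,\neg(P_\seq))$ together with the fact that $\alpha$ is insensitive to outer determinisation. The only point that needs a moment's care is invoking the right instance of Lemma~\ref{lemma.dom-det}(2) (the $\neg P$ clause, $\dom(\alpha(R)\neg P) = \dom(R\,\neg(P_\seq))$) and, if one goes via Lemma~\ref{lemma.box-nu}(1) rather than the direct $\alpha \circ \fus = \alpha$ route, recalling that $\fus(R)$ is outer deterministic and hence a fixpoint of $\fus$.

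If a self-contained computation is preferred over citing the earlier remark, one can instead unfold $\fus(R) = \{(a,B) \mid B = \bigcup R(a)\}$ and $\neg(P_\seq) = \{B \subseteq Y \mid B \not\subseteq P\}$, so that $\dom(\fus(R)\,\neg(P_\seq)) = \{a \mid \bigcup R(a) \not\subseteq P\}$; its complement is $\{a \mid \forall B.\ R_{a,B} \Rightarrow B \subseteq P\}$, which is exactly $\fboxalp{R} P$. Either way the proof is a couple of lines.
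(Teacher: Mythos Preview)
Your proposal is correct and follows essentially the same route as the paper: the paper's one-line chain $\fboxalp{R} P = \fboxr{\alpha(R)} P = \fboxr{\alpha(\fus(R))} P = \fboxr{\fus(R)} P_\seq = \neg \dom(\fus(R)\,\neg(P_\seq))$ is exactly your two steps---first $\alpha \circ \fus = \alpha$ (equivalently $\fboxalp{R} = \fboxalp{\fus(R)}$), then the already-recorded identity $\fboxalp{S} P = \fboxr{S} P_\seq$ applied to $S = \fus(R)$. Your observation that the identity $\fboxalp{S} P = \neg\dom(S\,\neg(P_\seq))$ already holds for arbitrary $S$ (so that passing through $\fus$ is, from this angle, a repackaging rather than a genuine strengthening) is accurate and worth noting.
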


\begin{proof}
  $\fboxalp{R} P = \fboxr{\alpha(R)} P = \fboxr{\alpha(\fus(R))} P = \fboxr{\fus(R)} P_\seq = \neg \dom(\fus(R) \neg(P_\seq))$.
\end{proof}

Our final definition of $\fboxalp{-}$ requires a technical lemma.

\begin{lemma}
  \label{lemma.box-left-residual-aux}
  Let $R : X \rto \Pow Y$, $P \subseteq X$ and $Q \subseteq Y$.
  Then $P \subseteq \fboxalp{R} Q \Leftrightarrow P R \subseteq R Q_\seq$.
\end{lemma}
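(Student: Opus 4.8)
The plan is to collapse both sides of the claimed equivalence to the single statement that the multirelation $P R \neg(Q_\seq)$ is empty, and then read off the result.

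First I would rewrite the left-hand side. Recall that $\fboxalp{R} Q = \fboxr{R} Q_\seq = \neg \dom(R \neg(Q_\seq))$ (this is the identity $\fboxalp{R} P = \fboxr{R} P_\seq$ derived from Lemma~\ref{lemma.dom-det}, equivalently Lemma~\ref{lemma.box-U} combined with Lemma~\ref{lemma.test-u}(1)). Since for a test $P$ we have $P \subseteq \neg T \Leftrightarrow P \cap T = \emptyset$, it follows that $P \subseteq \fboxalp{R} Q \Leftrightarrow P \cap \dom(R \neg(Q_\seq)) = \emptyset$. Now use the standard domain laws $\dom(P S) = P \cap \dom(S)$ for the test $P$ (a special case of what is recorded in Section~\ref{section.tests-domain}) and $\dom(T) = \emptyset \Leftrightarrow T = \emptyset$ to rewrite this as $P R \neg(Q_\seq) = \emptyset$.

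Next I would rewrite the right-hand side. By Lemma~\ref{lemma.test-u}(1), $R Q_\seq = R \cap U Q_\seq$ and $R \neg(Q_\seq) = R - U Q_\seq$. Because $P R \subseteq R$ always holds, the inclusion $P R \subseteq R \cap U Q_\seq$ is equivalent to $P R \subseteq U Q_\seq$, that is, to $P R - U Q_\seq = \emptyset$. Since $P$ is a test, $P R \cap S = P (R \cap S)$ for every $S$ of matching type, so $P R - U Q_\seq = P (R - U Q_\seq) = P R \neg(Q_\seq)$, and hence $P R \subseteq R Q_\seq \Leftrightarrow P R \neg(Q_\seq) = \emptyset$. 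Combining the two reductions proves the equivalence.

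I do not expect a genuine obstacle: the argument is pure bookkeeping with tests, domain and the power test. The only point requiring a moment's care is the step $P R \subseteq R \cap U Q_\seq \Leftrightarrow P R \subseteq U Q_\seq$, which rests on the trivial inclusion $P R \subseteq R$, together with the companion test identity $P R \cap S = P(R \cap S)$; both are immediate from $P \subseteq \Id$. As a sanity check one can also unfold everything set-theoretically: $P R = \{ (a,B) \mid a \in P \wedge R_{a,B} \}$ and $R Q_\seq = \{ (a,B) \mid R_{a,B} \wedge B \subseteq Q \}$, so $P R \subseteq R Q_\seq$ says exactly $\forall a \in P .\ \forall B .\ R_{a,B} \Rightarrow B \subseteq Q$, which is the unfolding of $P \subseteq \fboxalp{R} Q$.
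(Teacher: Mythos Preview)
Your proposal is correct and follows essentially the same route as the paper: both reduce each side of the equivalence to the single condition $P R \neg(Q_\seq) \subseteq \emptyset$ via the identification $\fboxalp{R} Q = \neg \dom(R \neg(Q_\seq))$, standard test/domain laws, and Lemma~\ref{lemma.test-u}(1). Your write-up is simply more explicit about the final step $P R \neg(Q_\seq) \subseteq \emptyset \Leftrightarrow P R \subseteq R Q_\seq$, which the paper leaves as a one-line implication.
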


\begin{proof}
  Using $P$ and $Q$ as relational or multirelational tests depending on the context,
  \begin{align*}
                     P \subseteq \fboxalp{R} Q
     \Leftrightarrow P \subseteq \neg \dom(R \neg (Q_\seq))
     \Leftrightarrow P \dom(R \neg (Q_\seq)) \subseteq \emptyset
     \Leftrightarrow P R \neg (Q_\seq) \subseteq \emptyset
     \Leftrightarrow P R \subseteq R Q_\seq.
  \end{align*}
  The penultimate step uses a standard property of $\dom$ of relations.
\end{proof}

\begin{lemma}
  \label{lemma.box-left-residual}
  Let $R : X \rto \Pow Y$ and $P \subseteq Y$.
  Then $\fboxalp{R} P = (U P_\seq) / R \cap 1 = (R P_\seq) / R \cap 1$.
\end{lemma}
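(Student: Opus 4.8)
The plan is to read off both equalities from the adjunction already isolated in Lemma~\ref{lemma.box-left-residual-aux}. Since $\fboxalp{R} P$ is a test, that lemma -- with its test variable instantiated to $T$ and its equivalence $T \subseteq \fboxalp{R} P \Leftrightarrow T R \subseteq R P_\seq$ -- exhibits $\fboxalp{R} P$ as the \emph{largest} test $T$ satisfying $T R \subseteq R P_\seq$. So the whole proof reduces to checking that $(R P_\seq)/R \cap 1$ and $(U P_\seq)/R \cap 1$ are each the largest such test, where, as elsewhere in the paper, $\cap 1$ extracts the largest test below a relation of type $X \rto X$.

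For the form $(R P_\seq)/R \cap 1$ I would simply use the residuation adjunction for relational composition, $T R \subseteq R P_\seq \Leftrightarrow T \subseteq (R P_\seq)/R$: the largest test below $(R P_\seq)/R$ is then $(R P_\seq)/R \cap 1$, which by Lemma~\ref{lemma.box-left-residual-aux} equals $\fboxalp{R} P$. For the form $(U P_\seq)/R \cap 1$ I would first rewrite $R P_\seq = R \cap U P_\seq$ using Lemma~\ref{lemma.test-u}(1); since every test $T$ satisfies $T R \subseteq R$ automatically, the constraint $T R \subseteq R \cap U P_\seq$ collapses to $T R \subseteq U P_\seq$, i.e.\ $T \subseteq (U P_\seq)/R$, so the largest such test is $(U P_\seq)/R \cap 1$, again equal to $\fboxalp{R} P$. (Equivalently, one can show $(R P_\seq)/R \cap 1 = (U P_\seq)/R \cap 1$ head-on: $\subseteq$ is monotonicity of the residual in its numerator, and $\supseteq$ holds because the left-hand test $T$ satisfies both $T R \subseteq U P_\seq$ and, being a test, $T R \subseteq R$, hence $T R \subseteq R \cap U P_\seq = R P_\seq$.)

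I do not expect a real obstacle: the argument is pure residuation bookkeeping layered on Lemmas~\ref{lemma.box-left-residual-aux} and~\ref{lemma.test-u}. The points needing care are the conventions -- reading $\cap 1$ as ``largest test below'' in keeping with the paper's blanket identification of relational and multirelational tests, and taking the residual notation as the overloaded one from Section~\ref{section.relation-multirelation} -- together with the small but decisive observation that $T \subseteq 1$ forces $T R \subseteq R$, which is exactly what makes the $R$-version and the $U$-version of the residual agree.
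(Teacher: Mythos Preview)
Your proposal is correct and follows essentially the same route as the paper: both proofs combine Lemma~\ref{lemma.box-left-residual-aux} with the residuation Galois connection, invoke Lemma~\ref{lemma.test-u}(1) to pass between $R P_\seq$ and $R \cap U P_\seq$, and use that a test $T$ always satisfies $T R \subseteq R$ to drop the $R$-conjunct. The only cosmetic difference is that the paper treats the $U P_\seq$ form first and declares the $R P_\seq$ form ``similar'', whereas you do them in the opposite order.
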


\begin{proof}
  ~
  For the first identity we use Lemma~\ref{lemma.box-left-residual-aux} and the Galois connection for residuation,
  \begin{align*}
                            Q \subseteq \fboxalp{R} P
          & \Leftrightarrow Q R \subseteq R P_\seq \\
          & \Leftrightarrow Q R \subseteq R \cap U P_\seq \\
          & \Leftrightarrow Q R \subseteq U P_\seq \\
          & \Leftrightarrow Q \subseteq (U P_\seq) / R \\
          & \Leftrightarrow Q \subseteq (U P_\seq) / R \cap 1.
  \end{align*}
  Thus $\fboxalp{R} P = (U P_\seq) / R \cap 1$.
  The proof of the second identity is similar.
\end{proof}

The preceding definitions of $\fboxalp{-}$ are useful for establishing further properties.
Above we have seen that $\fboxalp{\fis(R)} = \fboxalp{\fus(R)}$.
It turns out that this is an instance of a more general result.
To this end we consider sufficient conditions for a function $f$ to satisfy $\fboxalp{R} = \fboxalp{f(R)}$ or $\fboxr{R} = \fboxalp{f(R)}$:
\begin{enumerate}
\item Note that $\fboxalp{R} = \fboxr{\alpha(R)} = \fboxr{\alpha(f(R))} = \fboxalp{f(R)}$ if $\alpha \circ f = \alpha$.
      The latter holds, for example, for $f = \down{}$ or $f = \nu$ because $\alpha(\down{R}) = \alpha(R)$~\cite[Lemma 3.15(3)]{FurusawaGuttmannStruth2023b} and $\alpha \circ \nu = \alpha$~\cite[Lemma 4.1(1)]{FurusawaGuttmannStruth2023b}.
\item If $\fboxalp{R} = g(f(R))$ for an arbitrary function $g$ describing the context, then
      \begin{equation*}
        \fboxalp{R} = g(f(R)) = g(f(f(R))) = \fboxalp{f(R)}
      \end{equation*} if $f$ is idempotent.
      This situation arises, for example, for $g = \fboxast{-}$ and $f = \fis$ or $f = \nu(\down{-})$, and for $g = \fboxast{\nu(-)}$ and $f = \down{}$ by~\cite[Lemma 3.17]{FurusawaGuttmannStruth2023b} and Lemma~\ref{lemma.box-nu}.
      Another instance has $f = \fus$ according to Lemma~\ref{lemma.box-fusion}.
\item If $\fboxalp{R} = \fboxalp{g(R)}$ has already been established (for example, by the above instances) and $g \circ f = g$, then $\fboxalp{R} = \fboxalp{g(R)} = \fboxalp{g(f(R))} = \fboxalp{f(R)}$ generalising the first pattern.
      Combinations of $\fis$, $\fus$ and $\nu$ that give rise to instances can be found in~\cite[Lemma 3.17, 4.1]{FurusawaGuttmannStruth2023b}.
\item If $\fboxalp{R} = \fboxr{g(R)}$ has already been established and $f$ is a right-inverse of $g$, then
      \begin{equation*}
        \fboxr{R} = \fboxr{g(f(R))} = \fboxalp{f(R)}.
      \end{equation*}
      For example, $\Lambda$ is a right-inverse of $\alpha$, hence $\fboxr{R} = \fboxalp{\Lambda(R)}$ for any relation $R$.
\end{enumerate}

\begin{remark}
  The above instances imply that $\fboxalp{R} = \fboxalp{\fus(R)} = \fboxalp{\fis(R)} = \fboxalp{\down{R}} = \fboxalp{\nu(R)}$ and further equalities.
  All of the arguments $R$, $\fus(R)$, $\fis(R)$, $\down{R}$ and $\nu(R)$ contain essentially the same information (as regards box) just arranged differently with different degrees of redundancy.
  The different options have different properties: for example, $\down{\fus(-)}$ and $\down{}$ are closure operators with respect to $\subseteq$; $\nu$ is an interior operator with respect to $\subseteq$; $\fus$/$\fis$ are closure/interior operators with respect to $\subh$; other options give extrema with respect to certain (pre)orders.
\end{remark}

\subsection{Multirelational modal operators as relations}

We now return to the right-hand triangle in the diagram from the beginning of this section.
First we consider graphs of the relational modal operators $\fboxr{-}$ and $\fdiar{-}$.
As already mentioned, they are given by $\gfboxr{-} = \Lambda({\ni} / (-))$ and $\gfdiar{-} = \Pow((-)^\converse)$, respectively.
More concretely, rewriting the definitions at the beginning of this section shows that, for $R : X \rto Y$, $P \subseteq X$ and $Q \subseteq Y$,
\begin{equation*}
  \gfdiar{R} = \{ (Q,P) \mid P = \fdiar{R} Q \} \qquad
  \text{ and } \qquad
  \gfboxr{R} = \{ (Q,P) \mid P = \fboxr{R} Q \}.
\end{equation*}
Backward modalities $\langle R^\converse \rangle$ and $[ R^\converse ]$ correspond in an analogous way to $\Pow(R)$ and $\Lambda({\ni} / R^\converse)$.
The extension to the multirelational modalities $\gfdiaalp{-}$ and $\gfboxalp{-}$ is then straightforward by inserting $\alpha$'s: for $R : X \rto \Pow Y$, $P \subseteq X$ and $Q \subseteq Y$, $\gfdiaalp{R} = \gfdiar{\alpha(R)}$ and $\gfboxalp{R} = \gfboxr{\alpha(R)}$, and therefore
\begin{equation*}
  \gfdiaalp{R} = \{ (Q,P) \mid P = \fdiaalp{R} Q \} \qquad
  \text{ and } \qquad
  \gfboxalp{R} = \{ (Q,P) \mid P = \fboxalp{R} Q \}.
\end{equation*}
This explains in particular the role of the Kleisli lifting $(-)_\Pow = \Pow \circ \alpha$ as a backward diamond operator $(X \rto \Pow Y) \to (\Pow X \rto \Pow Y)$ relative to its standard counterpart $(-)_K : (X \to \Pow Y) \to (\Pow X \to \Pow Y)$ in our diagram more formally.

For any relation $X \rto Y$ or multirelation $X \rto \Pow Y$ we can use these correspondences for translating properties from the maps $\Pow X \to \Pow Y$ to their graphs $\Pow X \rto \Pow Y$, at least in tabular allegories.

Alternatively we can also reason about the latter in point-free style in the extended language of power allegories used in previous sections.
These use the subset relation $\Omega = {\in} \backslash {\in} = -(-{\ni} {\in})$ and the complement relation $C = \syq{\in}{-{\in}} = \Lambda(-{\ni})$ rather strongly.
They also use properties such as $-{\in} C = {\in}$ and ${\in} C = -{\in}$, $C = C^\converse$, $C^2 = \Id$ and in particular that $R (-S) = -(R S)$ if and only if relation $R$ is outer deterministic apart from other standard properties of the relational calculus.
We also recall a lemma from~\cite{FurusawaGuttmannStruth2023b}, to which we add a third property.

\begin{lemma}
  \label{lemma.rel-mod-props}
  Let $R : X \rto Y$.
  Then
  \begin{enumerate}
  \item $\Lambda(R) C = \Lambda(-R)$,
  \item $\Lambda(R) \Omega = R^\converse \backslash {\in} = ({\ni} / R)^\converse$.
  \item ${\ni} / R = (R {\in})^{\mathsf{d}\converse}$.
  \end{enumerate}
\end{lemma}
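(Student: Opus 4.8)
The plan is to prove the identity by a short direct computation in the relational calculus: unfold the definition of the relational dual $\dual{-}$, push the converse inward, and then recognise the complement--converse presentation of the residual.

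First I would unfold $\dual{-}$: for $S : X \rto \Pow Y$ the dual is $\dual{S} = -(S C)$, where $C$ is the complementation relation on $\Pow Y$, so that $(R{\in})^{\mathsf{d}} = -(R{\in}C)$. Applying the identity ${\in}C = -{\in}$ recalled before the lemma turns this into $-(R(-{\in}))$. Next I would take converses, using that converse commutes with complementation and with composition and that $(-{\in})^\converse = -({\in}^\converse) = -{\ni}$; this gives $(R{\in})^{\mathsf{d}\converse} = -((R(-{\in}))^\converse) = -((-{\ni})\,R^\converse)$. Finally I would invoke the presentation $T/S = -(-T\,S^\converse)$ of the residual from Section~\ref{subsection.binary-relations}, instantiated at $T = {\ni}$ and $S = R$, which yields $-(-{\ni}\,R^\converse) = {\ni}/R$ and completes the proof.

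Alternatively, the identity follows from part~(2): since $({\ni}/R)^\converse = \Lambda(R)\,\Omega$, it suffices to establish $(R{\in})^{\mathsf{d}} = \Lambda(R)\,\Omega$. Writing $R = \Lambda(R){\ni}$ (Lemma~\ref{lemma.lambda-alpha-props}(1)) and using ${\ni}(-{\in}) = -\Omega$ together with the outer determinacy of $\Lambda(R)$, which lets complementation pass through composition so that $\Lambda(R)(-\Omega) = -(\Lambda(R)\Omega)$, gives $R(-{\in}) = -(\Lambda(R)\Omega)$ and hence $(R{\in})^{\mathsf{d}} = -(R(-{\in})) = \Lambda(R)\Omega$.

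I do not expect a genuine obstacle here: the only point that needs care is the bookkeeping of types --- the complementation relation $C$ lives on $\Pow Y$, so in $(R{\in})^{\mathsf{d}\converse}$ the dual must be applied before the converse --- and making sure the complement-commutation identities are applied in the correct argument position. Everything else is a one-line calculation from facts already collected in Section~\ref{subsection.binary-relations} and the paragraph preceding the lemma.
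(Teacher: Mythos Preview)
Your primary argument is correct and is essentially the paper's own proof of item~(3): the paper computes $(R{\in})^{\mathsf{d}\converse} = -(C(R{\in})^\converse) = -(-{\ni}\,R^\converse) = {\ni}/R$, taking the converse first and then using $C{\ni} = -{\ni}$, whereas you apply ${\in}C = -{\in}$ first and then take the converse --- the same three-step residual calculation in a different order. Your alternative derivation via part~(2) and outer determinacy of $\Lambda(R)$ is a genuinely additional route that the paper does not give; it is sound and nicely reuses the already-proved identity, at the cost of invoking the ``determinism commutes with complement'' fact, which the direct computation avoids.
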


\begin{proof}
  Items (1) and (2) are proved in~\cite[Lemma 2.3]{FurusawaGuttmannStruth2023b}.
  For (3), $(R {\in})^{\mathsf{d}\converse} = -(C (R {\in})^\converse)=-(-{\ni} R^\converse) = {\ni} / R$, using the definition of $(-)^{\mathsf{d}}$ in the first step.
\end{proof}

\begin{corollary}
  \label{corollary.box-no-residual}
  Let $R : X \rto Y$.
  Then $\gfboxr{R} = \Lambda((R {\in})^{\mathsf{d}\converse})$ and $\gfdiar{R} = \Lambda((R {\in})^\converse)$.
\end{corollary}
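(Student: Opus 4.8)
The plan is to obtain both identities directly from definitions and lemmas already in hand, so that essentially no new computation is required.

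For the box equation I would start from the definition $\gfboxr{R} = \Lambda({\ni} / R)$ recorded at the end of Section~\ref{subsection.new-diamond} and simply rewrite the residual using Lemma~\ref{lemma.rel-mod-props}(3), which states ${\ni} / R = (R {\in})^{\mathsf{d}\converse}$. Substituting this inside $\Lambda$ yields $\gfboxr{R} = \Lambda((R {\in})^{\mathsf{d}\converse})$, which is the first claim.

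For the diamond equation I would start from $\gfdiar{R} = \Pow(R^\converse)$ and unfold the definition of the relational image functor, $\Pow(S) = \Lambda({\ni} S)$, to get $\gfdiar{R} = \Lambda({\ni_Y} R^\converse)$. Since ${\ni_Y} = {\in_Y}^\converse$ and converse is contravariant, ${\ni_Y} R^\converse = ({\in_Y}^\converse) R^\converse = (R {\in_Y})^\converse$, so $\gfdiar{R} = \Lambda((R {\in})^\converse)$, as desired.

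I expect no genuine obstacle here; the proof is a short bookkeeping exercise. The only point that needs a moment of care is keeping the carrier subscripts on the membership relations straight: passing from $\Pow(R^\converse)$ with $R^\converse : Y \rto X$ forces ${\ni_Y}$, which matches the ${\in_Y}$ appearing in $R{\in_Y}$, and it is there that a careless sign or subscript slip would be most likely. Everything else is immediate from the cited definitions and Lemma~\ref{lemma.rel-mod-props}(3).
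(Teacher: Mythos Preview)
Your proposal is correct and follows essentially the same approach as the paper: the first identity is obtained by substituting Lemma~\ref{lemma.rel-mod-props}(3) into the definition $\gfboxr{R}=\Lambda({\ni}/R)$, and the second by unfolding $\Pow(R^\converse)=\Lambda({\ni}R^\converse)=\Lambda((R{\in})^\converse)$, which the paper simply calls trivial.
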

The second identity is trivial, but it shows the correspondence with the first one.

Next we check the standard relationships between boxes and diamonds.
In (1) below we prove the De Morgan duality between boxes and diamonds.
In (2) and (3) we translate the standard conjugations and Galois connections for relational boxes and diamonds:
\begin{equation*}
  \fdiar{R} P \subseteq \neg Q \Leftrightarrow \fdiar{R^\converse} Q \subseteq \neg P \qquad
  \text{ and } \qquad
  \fdiar{R^\converse} P \subseteq Q \Leftrightarrow P \subseteq \fboxr{R} Q.
\end{equation*}

\begin{lemma}
  \label{lemma.box-dia-dual}
  Let $R : X \rto Y$.
  Then
  \begin{enumerate}
  \item $C \gfdiar{R} C = \gfboxr{R}$ and $\gfdiar{R} = C \gfboxr{R} C$,
  \item $C (\gfdiar{R} \Omega)^\converse = \gfdiar{R^\converse} C \Omega^\converse$,
  \item $\gfboxr{R} \Omega^\converse = (\gfdiar{R^\converse} \Omega)^\converse$.
  \end{enumerate}
\end{lemma}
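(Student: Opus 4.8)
The plan is to argue all three parts point-free in the power-allegory language, relying on the residual-free representations $\gfdiar{R} = \Pow(R^\converse) = \Lambda((R{\in})^\converse)$ and $\gfboxr{R} = \Lambda((R{\in})^{\mathsf{d}\converse})$ of Corollary~\ref{corollary.box-no-residual}, on the identity $(R{\in})^{\mathsf{d}\converse} = -(C (R{\in})^\converse)$ read off from the proof of Lemma~\ref{lemma.rel-mod-props}(3), on the three $\Lambda$-laws $\Lambda(T) C = \Lambda(-T)$ (Lemma~\ref{lemma.rel-mod-props}(1)), $C \Lambda(T) = \Lambda(C T)$ (Lemma~\ref{lemma.lambda-alpha-props}(2) applied to the function $C$) and $\Lambda(T) \Omega = ({\ni} / T)^\converse$ (Lemma~\ref{lemma.rel-mod-props}(2)), and on the elementary facts $C^\converse = C$, $C C = \Id$, $C$ outer deterministic and bijective (so $C(-S) = -(CS)$ and $(-S) C = -(S C)$), ${\in} C = -{\in}$ with its consequences $-{\in} C = {\in}$, $C {\ni} = -{\ni}$, $C(-{\ni}) = {\ni}$, together with $T / S = -(-T S^\converse)$, $\Pow(S) = \Lambda({\ni} S)$ and $\Omega^\converse = C \Omega C$.

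For part (1) I would pull the right-hand $C$ inside via $\Lambda(S) C = \Lambda(-S)$, then the left-hand $C$ via $C \Lambda(T) = \Lambda(C T)$, then use outer determinism of $C$ and the definition of $(-)^{\mathsf{d}}$:
\[ C \gfdiar{R} C = C \Lambda\bigl(-(R{\in})^\converse\bigr) = \Lambda\bigl(C(-(R{\in})^\converse)\bigr) = \Lambda\bigl(-(C (R{\in})^\converse)\bigr) = \Lambda\bigl((R{\in})^{\mathsf{d}\converse}\bigr) = \gfboxr{R}. \]
Conjugating $C \gfdiar{R} C = \gfboxr{R}$ with $C$ and cancelling $C C = \Id$ then gives $\gfdiar{R} = C \gfboxr{R} C$, which completes (1).

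Parts (2) and (3) are the point-free renderings of the conjugation $\fdiar{R} P \subseteq \neg Q \Leftrightarrow \fdiar{R^\converse} Q \subseteq \neg P$ and of the Galois connection $\fdiar{R^\converse} P \subseteq Q \Leftrightarrow P \subseteq \fboxr{R} Q$ recalled just before the statement, and the plan is to reduce each of the four sides to a common normal form. I would first establish the auxiliary law $\Lambda(T) \Omega^\converse = -\bigl((-T){\in}\bigr)$ out of $\Omega^\converse = C \Omega C$, $\Lambda(S) C = \Lambda(-S)$, $\Lambda(T)\Omega = ({\ni}/T)^\converse$ and $(-{\in}) C = {\in}$. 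Using $\Pow(S) = \Lambda({\ni} S)$ and expanding residuals, both sides of (2) then collapse to $-({\ni} R {\in})$: the left side through $(\gfdiar{R}\Omega)^\converse = {\ni}/({\ni}R^\converse) = -((-{\ni})R{\in})$ followed by $C(-{\ni}) = {\ni}$, and the right side through $\gfdiar{R^\converse} C = \Lambda(-({\ni}R))$ and the auxiliary law. Likewise both sides of (3) collapse to ${\ni}/({\ni}R) = -((-{\ni}) R^\converse {\in})$: the right side directly from $\Lambda(T)\Omega = ({\ni}/T)^\converse$, and the left side from the auxiliary law with $T = {\ni}/R$ together with $-({\ni}/R) = (-{\ni})R^\converse$.

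All of this is mechanical, so the obstacle is not conceptual but bookkeeping: the chains constantly mix converse, relation-complement, complement-of-a-composite, residuals, and left- versus right-multiplication by $C$, and one has to be deliberate about whether a given occurrence of $C$ is removed via $\Lambda(S)C = \Lambda(-S)$ or via $\Omega^\converse = C\Omega C$, so that the two sides of (2) and of (3) really meet at the same normal form rather than at expressions that only superficially differ.
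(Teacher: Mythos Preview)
Your proposal is correct and follows essentially the same point-free strategy as the paper: pulling $C$ through $\Lambda$ and $\Pow$ via Lemma~\ref{lemma.lambda-alpha-props} and Lemma~\ref{lemma.rel-mod-props}, and reducing both sides of~(2) to the common normal form $-({\ni} R {\in})$.

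Two minor organisational differences are worth noting. For~(1) the paper works directly with the defining forms $\gfdiar{R} = \Pow(R^\converse)$ and $\gfboxr{R} = \Lambda({\ni}/R)$, absorbing $\Pow(R^\converse)$ via $\Lambda(S)\Pow(T) = \Lambda(ST)$ rather than via your $C\Lambda(T) = \Lambda(CT)$; the effect is the same. For~(3) the paper does not compute at all: once~(1) and~(2) are in hand, instantiating~(2) with $R^\converse$ in place of $R$ gives $\gfdiar{R} C \Omega^\converse = C(\gfdiar{R^\converse}\Omega)^\converse$, and then $\gfboxr{R}\Omega^\converse = C\gfdiar{R}C\Omega^\converse = C C(\gfdiar{R^\converse}\Omega)^\converse = (\gfdiar{R^\converse}\Omega)^\converse$ is immediate from $C^2 = \Id$. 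Your direct computation of~(3) via the auxiliary law $\Lambda(T)\Omega^\converse = -\bigl((-T){\in}\bigr)$ is perfectly sound, but the paper's derivation from~(1) and~(2) is shorter and makes the logical dependency explicit.
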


\begin{proof}
  For (1), $C \Pow(R^\converse) C = \Lambda(-{\ni}) \Pow(R^\converse) C = \Lambda (-{\ni} R^\converse) C = \Lambda (-(-{\ni} R^\converse)) = \Lambda({\ni} / R)$.
  The first step unfolds $C$, the second uses Lemma~\ref{lemma.lambda-alpha-props}, the third Lemma~\ref{lemma.rel-mod-props}(1) and the fourth a standard property of residuals.
  The second identity in (1) then follows from $C^2 = \Id$.

  For (2),
  \begin{align*}
      \Pow(R) C \Omega^\converse
    & = \Lambda(-({\ni} R)) (-(-{\ni} {\in})) \\
    & = -(\Lambda(-({\ni} R)) (-{\ni}) {\in}) \\
    & = -(\Lambda(-({\ni} R)) C {\ni} {\in}) \\
    & = -(\Lambda({\ni} R) {\ni} {\in}) \\
    & = -({\ni} R {\in}) \\
    & = -(C (-{\ni}) ({\ni} R^\converse)^\converse) \\
    & = C ({\ni} / ({\ni} R^\converse)) \\
    & = C (\Pow(R^\converse) \Omega)^\converse
  \end{align*}
  The first step rewrites $\Pow$ and $\Omega^\converse$, and applies Lemma~\ref{lemma.rel-mod-props}(1).
  The second uses outer determinism of $\Lambda(-({\ni} R))$.
  The third uses outer determinism of $C$ and Lemma~\ref{lemma.lambda-alpha-props}.
  The fourth uses again Lemma~\ref{lemma.rel-mod-props}(1) and the fifth again Lemma~\ref{lemma.lambda-alpha-props}.
  The sixth step uses standard properties of $C$.
  The seventh uses outer determinism of $C$ and standard properties of residuation.
  The final step uses Lemma~\ref{lemma.rel-mod-props}(2) and properties of residuation.

  Finally, (3) is immediate from (1) and (2).
\end{proof}

The multirelational box $\gfboxalp{-}$ can be expressed using the superset relation $\Omega^\converse$ and without residuation.

\begin{lemma}
  \label{lemma.predtrans-2}
  Let $R : X \rto \Pow Y$.
  Then $\gfboxalp{R} = \Lambda(\Omega^\converse / R) = \Lambda((R {\ni} {\in}) \dual{} {}^\converse)$ and $\gfdiaalp{R} = \Lambda((R {\ni} {\in})^\converse)$.
\end{lemma}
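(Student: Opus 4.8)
The plan is to reduce everything to the relational modal operators. Recall from the preceding subsection that $\gfboxalp{R} = \gfboxr{\alpha(R)}$ and $\gfdiaalp{R} = \gfdiar{\alpha(R)}$ by definition, and that $\alpha(R) = R{\ni}$. Moreover $\Lambda$ is injective, being an isomorphism onto the outer deterministic multirelations with inverse $\alpha$ (Lemma~\ref{lemma.lambda-alpha-props}(1)), so it suffices to establish the corresponding identities between the underlying relations and then apply $\Lambda$.

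The diamond case is immediate: Corollary~\ref{corollary.box-no-residual} gives $\gfdiar{S} = \Lambda((S{\in})^\converse)$ for any relation $S$, and instantiating $S = \alpha(R) = R{\ni}$ yields $\gfdiaalp{R} = \Lambda((R{\ni}{\in})^\converse)$.

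For the box, I would start from $\gfboxalp{R} = \gfboxr{R{\ni}} = \Lambda({\ni}/(R{\ni}))$ and compute the relation ${\ni}/(R{\ni})$ in two ways. First, the elementary residuation law $T/(SU) = (T/U)/S$ (which holds because $X(SU) \subseteq T \Leftrightarrow XS \subseteq T/U \Leftrightarrow X \subseteq (T/U)/S$), applied with $T = {\ni}$, $S = R$ and $U = {\ni}$, gives ${\ni}/(R{\ni}) = ({\ni}/{\ni})/R$; and ${\ni}/{\ni} = \Omega^\converse$, since $\Omega = {\in}\backslash{\in}$ together with $T\backslash S = (S^\converse/T^\converse)^\converse$ yields $\Omega = ({\ni}/{\ni})^\converse$. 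This establishes $\gfboxalp{R} = \Lambda(\Omega^\converse/R)$. Second, Lemma~\ref{lemma.rel-mod-props}(3) gives ${\ni}/(R{\ni}) = (R{\ni}{\in})^{\mathsf{d}\converse}$ directly (equivalently, instantiate Corollary~\ref{corollary.box-no-residual} at $S = \alpha(R) = R{\ni}$), which establishes $\gfboxalp{R} = \Lambda((R{\ni}{\in})^{\mathsf{d}\converse})$. The two displayed forms of $\gfboxalp{R}$ coincide because both equal $\gfboxr{\alpha(R)}$.

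There is no genuine obstacle here: the only care needed is the bookkeeping of types and converses in the residual manipulations, all of which reduce to the standard laws $T/(SU) = (T/U)/S$ and $T\backslash S = (S^\converse/T^\converse)^\converse$ together with Corollary~\ref{corollary.box-no-residual} and Lemma~\ref{lemma.rel-mod-props}(3).
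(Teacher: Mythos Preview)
Your proof is correct and follows essentially the same route as the paper: both instantiate the relational box and diamond at $S=\alpha(R)=R{\ni}$, use the currying law $T/(SU)=(T/U)/S$ together with $\Omega^\converse={\ni}/{\ni}$ for the first box identity, and invoke Lemma~\ref{lemma.rel-mod-props}(3) (equivalently Corollary~\ref{corollary.box-no-residual}) for the second box identity and the diamond. The remark about injectivity of $\Lambda$ is harmless but unnecessary, since both sides are already of the form $\Lambda(-)$ and you only need that $\Lambda$ preserves equality.
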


\begin{proof}
  For the first identity, $\Omega^\converse / R = ({\ni} / {\ni}) / R = {\ni} / R {\ni} = {\ni} / \alpha(R)$, where the second step uses a standard ``currying'' property of residuals.
  The second one is immediate from Lemma~\ref{lemma.rel-mod-props}(3).
\end{proof}

The results in Lemma~\ref{lemma.box-dia-dual} translate to $\gfdiaalp{-}$ and $\gfboxalp{-}$ by instantiation with $\alpha(R)$.

Finally, we consider how the modal operators $\fdiaast{-}$ and $\fboxast{-}$ can be expressed as graphs.
For $R : X \rto \Pow Y$ it can be checked that
\begin{equation*}
  \Lambda(\up{R} {}^\converse) = \{ (Q,P) \mid P = \fdiaast{R} Q \} \qquad
  \text{ and } \qquad
  \Lambda(\up{R} \dual{} {}^\converse) = \{ (Q,P) \mid P = \fboxast{R} Q \}.
\end{equation*}
Hence $\Lambda(\up{R} {}^\converse)$ is the analog of $\fdiaast{R}$ induced by the graph functor, and $\Lambda(\up{R} \dual{} {}^\converse)$ is the graph analog of $\fboxast{R}$.
We therefore write $\gfdiaast{R} = \Lambda(\up{R} {}^\converse)$ and $\gfboxast{R} = \Lambda(\up{R} \dual{} {}^\converse)$.

First we obtain the De Morgan dualities expected.
\begin{lemma}
  \label{lemma.predtrans-3-demorgan}
  Let $R : X \rto \Pow Y$.
  Then $\gfdiaast{R} = C \gfboxast{R} C$ and $\gfboxast{R} = C \gfdiaast{R} C$.
\end{lemma}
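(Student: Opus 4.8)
The plan is to work point-free, in the style of the surrounding lemmas, by unfolding $\gfdiaast{R} = \Lambda(\up{R}^\converse)$ and commuting the two flanking copies of the complementation relation $C$ through the power transpose, landing exactly on $\Lambda(\up{R}\dual{}{}^\converse) = \gfboxast{R}$.

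First I would prove $\gfboxast{R} = C\gfdiaast{R}C$. Starting from $C\gfdiaast{R}C = C\Lambda(\up{R}^\converse)C$, peel off the right-hand $C$ using Lemma~\ref{lemma.rel-mod-props}(1), which gives $\Lambda(\up{R}^\converse)C = \Lambda(-\up{R}^\converse)$. For the left-hand $C$, recall that $C = \Lambda(-{\ni})$ is a function, hence outer deterministic; so Lemma~\ref{lemma.lambda-alpha-props}(2) lets precomposition by $C$ pass through $\Lambda$, $C\Lambda(-\up{R}^\converse) = \Lambda(C(-\up{R}^\converse))$, and outer determinism of $C$ lets it pass through complement, $C(-\up{R}^\converse) = -(C\up{R}^\converse)$. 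Finally $-(C\up{R}^\converse) = \up{R}\dual{}{}^\converse$ by the definition of $(-)\dual{}$ — this is precisely the identity used in the first step of the proof of Lemma~\ref{lemma.rel-mod-props}(3), now with $\up{R}$ in place of $R{\in}$. Chaining these equalities yields $C\gfdiaast{R}C = \Lambda(\up{R}\dual{}{}^\converse) = \gfboxast{R}$. The second identity then follows immediately by conjugating with $C$ and using $C^2 = \Id$: $C\gfboxast{R}C = C(C\gfdiaast{R}C)C = C^2\gfdiaast{R}C^2 = \gfdiaast{R}$.

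I do not expect a real obstacle here; the only points requiring care are the side conditions that justify moving $C$ around (that $C$ is a function, so Lemma~\ref{lemma.lambda-alpha-props}(2) applies, and that $C$ is outer deterministic, so it distributes over complementation) and keeping straight which occurrence of $C$ is composed on which side. A purely set-theoretic alternative is also available, observing that $(Q,P)\in\gfboxast{R}$ iff $P=\fboxast{R}Q=\neg\fdiaast{R}\neg Q$ iff $(\neg Q,\neg P)\in\gfdiaast{R}$ iff $(Q,P)\in C\gfdiaast{R}C$, but this is less in keeping with the point-free development of this subsection, so I would relegate it to a remark if at all.
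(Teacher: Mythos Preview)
Your proposal is correct and follows essentially the same route as the paper: compute $C\,\gfdiaast{R}\,C = C\,\Lambda(\up{R}^\converse)\,C = \Lambda(-(C\,\up{R}^\converse)) = \Lambda(\up{R}\dual{}{}^\converse) = \gfboxast{R}$ using Lemma~\ref{lemma.rel-mod-props}(1) for the right-hand $C$ and a property from Lemma~\ref{lemma.lambda-alpha-props} (together with outer determinism of $C$) for the left-hand one, then deduce the other identity from $C^2=\Id$. The paper cites Lemma~\ref{lemma.lambda-alpha-props}(1) rather than~(2) for the step pushing $C$ through $\Lambda$, but your use of~(2) is the natural choice and the computation is otherwise identical.
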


\begin{proof}
  $C \Lambda(\up{R}^\converse) C = \Lambda(-(C \up{R}^\converse)) = \Lambda(\up{R}^{\mathsf{d}\converse})$, using Lemmas~\ref{lemma.lambda-alpha-props}(1) and~\ref{lemma.rel-mod-props}(1) in the first step and the definition of $(-)^{\mathsf{d}}$ in the second.
  The other identity then follows using $C^2 = \Id$.
\end{proof}

The operators $\gfdiaalp{-}$, $\gfdiaast{-}$, $\gfboxalp{-}$ and $\gfboxast{-}$ specialise once again to relational $\gfdiar{-}$ and $\gfboxr{-}$, by analogy to Remark~\ref{remark.box-dia-spec}.
This can be proved not only in point-wise fashion, as the following lemma shows.
Likewise we show how (some of) the properties from Lemma~\ref{lemma.box-nu} can be derived in the language of power allegories without tabulation.

\begin{lemma}
  \label{lemma.predtrans-3}
  Let $R : X \rto \Pow Y$ and $S : X \rto Y$.
  Then
  \begin{enumerate}
  \item $\gfdiar{S} = \gfdiaalp{\Lambda(S)} = \gfdiaalp{\eta(S)}$ and $\gfboxr{S} = \gfboxalp{\Lambda(S)} = \gfboxalp{\eta(S)}$,
  \item $\gfdiaalp{R} = \gfdiaalp{\fus(R)} = \gfdiaalp{\fis(R)} $ and $\gfboxalp{R} = \gfboxalp{\fus(R)} = \gfboxalp{\fis(R)}$,
  \item $\gfdiar{S} = \gfdiaast{S{\in}} = \gfdiaast{\eta(S)}$ and $\gfboxr{S} = \gfboxast{S {\in}} = \gfboxast{\eta(S)}$,
  \item $\gfdiaalp{R} = \gfdiaast{R {\ni} {\in}} = \gfdiaast{\fis(R)} $ and $\gfboxalp{R} = \gfboxast{R {\ni} {\in}} = \gfboxast{\fis(R)}$,
  \item $\gfdiar{S} = \gfboxast{\Lambda(S)} $ and $\gfboxr{S} = \gfdiaast{\Lambda(S)}$.
  \item $\gfdiaalp{R}= \gfboxast{\fus(R)}$ and $\gfboxalp{R}= \gfdiaast{\fus(R)}$.
  \end{enumerate}
\end{lemma}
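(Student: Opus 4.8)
The plan is to push every operator in the statement into the form $\Lambda(T)$ for an explicit relation $T$, using the three pairs of ``normal forms'' already available: $\gfdiar{S} = \Lambda((S{\in})^\converse)$ and $\gfboxr{S} = \Lambda((S{\in})^{\mathsf{d}\converse})$ from Corollary~\ref{corollary.box-no-residual}; $\gfdiaalp{R} = \Lambda((R{\ni}{\in})^\converse)$ and $\gfboxalp{R} = \Lambda((R{\ni}{\in})^{\mathsf{d}\converse})$ from Lemma~\ref{lemma.predtrans-2}; and $\gfdiaast{T} = \Lambda(\up{T}{}^\converse)$, $\gfboxast{T} = \Lambda(\up{T}{}^{\mathsf{d}\converse})$ by definition. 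Since $\Lambda$ is injective (it is a section of $\alpha$ by Lemma~\ref{lemma.lambda-alpha-props}(1)), each claimed identity reduces to an identity of the relations inside $\Lambda$, and in every item the box version is literally the diamond version with an extra $(-)^{\mathsf d}$ inserted, so the two halves of each item fall out of the same computation.

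Beyond the facts already in the excerpt I only need a handful of absorption laws: ${\in}\Omega = {\in}$ (from $\Id\subseteq\Omega = {\in}\backslash{\in}$ and the cancellation ${\in}({\in}\backslash{\in})\subseteq{\in}$), $\eta\Omega = {\in}$ (from Lemma~\ref{lemma.rel-mod-props}(2), $\Lambda(\Id)\Omega = \Id\backslash{\in} = {\in}$), and the bijective-pair identities $\alpha(\Lambda(S)) = S$, $\alpha(\eta(S)) = S$, $\alpha(\eta) = \Id$ from Lemma~\ref{lemma.lambda-alpha-props} and Remark~\ref{remark.box-dia-spec}, which in turn give $\alpha\circ\fus = \alpha = \alpha\circ\fis$ because $\fus = \Lambda\circ\alpha$ and $\fis = \eta\circ\alpha$; I also record $\fis(R) = \alpha(R)\eta$ and $\fus(R) = \Lambda(\alpha(R))$.

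With these, parts (1) and (2) are immediate: $\gfdiaalp{\Lambda(S)} = \gfdiar{\alpha(\Lambda(S))} = \gfdiar{S}$, $\gfdiaalp{\eta(S)} = \gfdiar{\alpha(\eta(S))} = \gfdiar{S}$, and $\gfdiaalp{\fus(R)} = \gfdiaalp{\fis(R)} = \gfdiaalp{R}$ since $\gfdiaalp{-} = \gfdiar{\alpha(-)}$ and $\alpha$ is unchanged under $\fus$, $\fis$; the box statements follow verbatim. For (3) and (4) I compute the relevant up-closures, $\up{(S{\in})} = S({\in}\Omega) = S{\in}$, $\up{\eta(S)} = S(\eta\Omega) = S{\in}$, $\up{(R{\ni}{\in})} = R{\ni}({\in}\Omega) = R{\ni}{\in}$ and $\up{\fis(R)} = \alpha(R)(\eta\Omega) = R{\ni}{\in}$, and substitute into $\gfdiaast{-} = \Lambda(\up{-}{}^\converse)$ and $\gfboxast{-} = \Lambda(\up{-}{}^{\mathsf{d}\converse})$; comparing with the normal forms above yields all four identities of (3) and all four of (4) at once. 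For (5) I use Lemma~\ref{lemma.rel-mod-props}(2) to rewrite $\up{\Lambda(S)} = \Lambda(S)\Omega = ({\ni}/S)^\converse$, whence $\gfdiaast{\Lambda(S)} = \Lambda({\ni}/S) = \gfboxr{S}$ directly, and then $\gfboxast{\Lambda(S)} = C\,\gfdiaast{\Lambda(S)}\,C = C\,\gfboxr{S}\,C = \gfdiar{S}$ via Lemma~\ref{lemma.predtrans-3-demorgan} and Lemma~\ref{lemma.box-dia-dual}(1). Finally (6) is (5) applied to $\alpha(R)$: since $\fus(R) = \Lambda(\alpha(R))$, $\gfdiaast{\fus(R)} = \gfboxr{\alpha(R)} = \gfboxalp{R}$ and $\gfboxast{\fus(R)} = \gfdiar{\alpha(R)} = \gfdiaalp{R}$.

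The main obstacle is bookkeeping rather than any conceptual difficulty: keeping the $(-)^{\mathsf d}$ and $(-)^\converse$ decorations in their correct positions when moving between the residual forms and the up-closure forms, and applying the $\Omega$- and $\eta$-absorptions on the correct side of a composition. A lesser nuisance is that most of these equalities admit several proofs (directly from the normal forms, through the De Morgan dualities of Lemmas~\ref{lemma.box-dia-dual} and~\ref{lemma.predtrans-3-demorgan}, or by chaining the earlier items), so I would fix the organisation above, dispatching (1)--(2) first, reducing (3)--(4) to the $\Omega$-absorptions, and handling (5)--(6) last, to avoid circular or redundant arguments.
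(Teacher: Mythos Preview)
Your proposal is correct and follows essentially the same route as the paper: items (1)--(2) via $\alpha\circ\Lambda = \Id$, $\alpha\circ\eta = \Id$ and $\alpha\circ\fus = \alpha = \alpha\circ\fis$; items (3)--(4) via the $\Omega$-absorptions ${\in}\Omega = {\in}$ and $\eta\Omega = {\in}$; item (5) via Lemma~\ref{lemma.rel-mod-props}(2) plus duality; and (2), (4), (6) obtained from (1), (3), (5) by substituting $S = \alpha(R)$. The only cosmetic difference is that you frame everything through explicit ``$\Lambda$-normal forms'' and injectivity of $\Lambda$, whereas the paper leaves that implicit, but the underlying computations coincide.
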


\begin{proof}
  For (1), $\gfdiaalp{\Lambda(S)} = \gfdiar{\alpha(\Lambda(S))} = \gfdiar{S}$.
  The proofs for $\eta$ and the box operators are similar.

  For (3), $\Lambda(\up{(S {\in})} {}^\converse) = \Lambda((S {\in} \Omega)^\converse) = \Lambda((S {\in})^\converse) = \Pow(S^\converse)$ using ${\in} \Omega = {\in}$.
  A proof of the second identity uses $\up{\eta(S)} = S 1 \Omega = S {\in}$ and is similar; and so are the remaining proofs.

  For (5), $\Lambda(\up{\Lambda(S)} {}^\converse) = \Lambda((\Lambda(S) \Omega)^\converse) = \Lambda({\ni} / S)$ using Lemma~\ref{lemma.rel-mod-props}(2); the second identity follows by duality.

  Items (2), (4) and (6) follow by instantiating (1), (3) and (5) with $S = \alpha(R)$.
\end{proof}

The following commutative diagram summarises the relationships between the various modal operators shown in the previous results.

\begin{equation*}
  \begin{tikzcd}
    \Pow Y \rto \Pow X & & \Pow Y \rto \Pow X \\[4ex]
    X \rto \Pow Y \ar[u, "\gfboxast{\,}" pos=0.6, "\gfdiaalp{}" pos=0.3] \ar[urr, "\gfdiaast{}" pos=0.8, "\gfboxalp{\,}" pos=0.7] & X \rto \Pow Y \ar[l, "\fus"] \ar[ul, crossing over, "\gfdiaalp{}" pos=0.65] \ar[ur, "\gfboxalp{\,}"' pos=0.65] \ar[r, "\fis"'] \ar[d, "\alpha"'] & X \rto \Pow Y \ar[ull, crossing over, "\gfdiaalp{}"' pos=0.7, "\gfdiaast{}"' pos=0.8] \ar[u, "\gfboxalp{\,}"' pos=0.3, "\gfboxast{\,}"' pos=0.6] \\[4ex]
    & X \rto Y \ar[luu, bend left=45, out=45, in=105, end anchor=210, looseness=1.4, "\gfdiar{}"] \ar[ul, "\Lambda"] \ar[ur, "\eta"'] \ar[ruu, bend right=45, out=-45, in=-105, end anchor=330, looseness=1.4, "\gfboxr{\,}"'] &
  \end{tikzcd}
\end{equation*}


\section{Goldblatt's Axioms for Concurrent Dynamic Logic}
\label{section.goldblatt}

The box and diamond axioms of Peleg's concurrent dynamic logic have already been derived in the multirelational semantics~\cite{FurusawaStruth2015a}.
In particular, the Kleene star $R^\seq$ has been defined as the least fixpoint of $\lambda X .\ 1 \cup R \cdot X$ and studied in this setting.
It exists because this function is order-preserving on the complete lattice $(\Mult(X,X),\subseteq)$.
While the general algebra of multirelations does not satisfy the usual star axioms of Kleene algebra (owing for instance to the absence of associativity of Peleg composition), the modal star axioms of concurrent dynamic algebra can nevertheless be derived.
Relative to these results it remains to derive Goldblatt's box axioms, which consider $\fboxalp{-}$ in combination with $\fdiaast{-}$, and disregard $\fboxast{-}$.
To simplify notation, we write $\langle - \rangle$ for $\fdiaast{-}$ and $[ - ]$ for $\fboxalp{-}$ in the following.

In algebraic form, Goldblatt's axioms~\cite{Goldblatt1992} for concurrent dynamic logic are
\begin{align}
  [ R ] (P \to Q) & \subseteq ([ R ] P \to [ R ] Q), \label{eq:G1} \tag{G1} \\
  [ R ] 1 & = 1, \label{eq:G2} \tag{G2} \\
  [R \seq S] P & = [ R ] [ S ] P, \label{eq:G3} \tag{G3} \\
  [R \cup S] P & = [ R ] P \cap [ S ] P, \label{eq:G4} \tag{G4} \\
  [R \iu S] P & = (\langle R \rangle 1 \to [ S ] P) \cap (\langle S \rangle 1 \to [ R ] P), \label{eq:G5} \tag{G5} \\
  [ R^\seq ] P & \subseteq P \cap [ R ] [ R^\seq ] P, \label{eq:G6} \tag{G6} \\
  [ R^\seq ] (P \to [ R ] P) & \subseteq (P \to [ R^\seq ] P), \label{eq:G7} \tag{G7} \\
  [ P ] Q & = P \to Q, \label{eq:G8} \tag{G8} \\
  [ R ] (P \to Q) & \subseteq \langle R \rangle P \to \langle R \rangle Q, \label{eq:G9} \tag{G9} \\
  \langle R \seq S \rangle P & = \langle R \rangle \langle S \rangle P, \label{eq:G10} \tag{G10} \\
  \langle R \cup S \rangle P & = \langle R \rangle P \cup \langle S \rangle P, \label{eq:G11} \tag{G11} \\
  \langle P \iu Q \rangle P & =\langle R \rangle P \cap \langle S \rangle P, \label{eq:G12} \tag{G12} \\
  P \cup \langle R \rangle \langle R^\seq \rangle P & = \langle R^\seq \rangle P, \label{eq:G13} \tag{G13} \\
  [ R^\seq ] (\langle R \rangle P \to P) & \subseteq (\langle R^\seq \rangle P \to P), \label{eq:G14} \tag{G14} \\
  \langle P \rangle Q & = P \cap Q, \label{eq:G15} \tag{G15} \\
  [ R ] \emptyset \cup \langle R \rangle 1 & = 1, \label{eq:G16} \tag{G16}
\end{align}
where we write $P \to Q$ for $\neg P \cup Q$.

\begin{proposition}
  \label{proposition.goldblatt}
  All axioms except (\ref{eq:G3}) hold in the multirelational semantics.
\end{proposition}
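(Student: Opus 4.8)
The plan is to exploit the identity $[ R ] = \fboxalp{R} = \fboxr{\alpha(R)}$, which reduces every box-only axiom to a statement about the \emph{relational} box of the approximation $\alpha(R)$, to be settled by standard relation-algebraic calculation together with the behaviour of $\alpha$ on unions, the unit $1$, tests, and the Kleene star. The diamond-only axioms (\ref{eq:G10})--(\ref{eq:G13}) and (\ref{eq:G15}) mention only $\langle - \rangle = \fdiaast{-}$; these are exactly the (star-)diamond axioms of Peleg's concurrent dynamic logic already derived in the multirelational semantics, so I would simply reuse them. Axiom (\ref{eq:G3}) is excluded deliberately: from Lemma~\ref{lemma.alpha-props} one gets only $[ R ][ S ] P \subseteq [ R \seq S ] P$, and Example~\ref{example.alpha-counter} with $P = \emptyset$ witnesses strictness, since $[ R \seq R ]\emptyset = \fboxr{\emptyset}\emptyset$ is the whole carrier while $[ R ][ R ]\emptyset$ is not.

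For the remaining non-star box axioms I would argue directly. Axioms (\ref{eq:G2}) and (\ref{eq:G4}) reduce, using that $\alpha$ preserves unions and $\alpha(1) = \Id$, to $\fboxr{\alpha(R)}1 = 1$ and $\fboxr{\alpha(R)\cup\alpha(S)} = \fboxr{\alpha(R)}\cap\fboxr{\alpha(S)}$; (\ref{eq:G1}) is the $K$-axiom of the relational box; and (\ref{eq:G8}) holds because $\alpha(P)$ is the isomorphic relational test, for which $\fboxr{P}Q = \neg\dom(P\,\neg Q) = P \to Q$. For (\ref{eq:G16}) I would rewrite $[ R ]\emptyset = \fboxr{\alpha(R)}\emptyset = \neg\dom(\alpha(R))$ and $\langle R \rangle 1 = \dom(R)$ and invoke $\dom(\alpha(R)) \subseteq \dom(R)$ from Lemma~\ref{lemma.dom-det}(1). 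For (\ref{eq:G5}) the first step is the identity $\alpha(R \iu S) = \dom(S)\,\alpha(R) \cup \dom(R)\,\alpha(S)$, read off from the pointwise descriptions of $\iu$ and $\alpha$ (the union of the $(R \iu S)$-images of $a$ is $\bigcup R(a) \cup \bigcup S(a)$ when $a \in \dom(R) \cap \dom(S)$ and empty otherwise); the relational box laws for union, composition and tests then turn $[ R \iu S ] P$ into $(\dom(S) \to [ R ] P) \cap (\dom(R) \to [ S ] P)$, and $\dom(R) = \langle R \rangle 1$, $\dom(S) = \langle S \rangle 1$ give (\ref{eq:G5}). Axiom (\ref{eq:G9}) I would obtain from the mixed inclusion $[ R ] A \cap \langle R \rangle B \subseteq \langle R \rangle(A \cap B)$ --- valid because $a \in [ R ] A$ forces $\bigcup R(a) \subseteq A$, so any witness $B_0 \subseteq B$ with $R_{a,B_0}$ already lies inside $A \cap B$ --- instantiated with $A = P \to Q$, $B = P$ and the trivial $(P \to Q) \cap P \subseteq Q$.

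The three star axioms (\ref{eq:G6}), (\ref{eq:G7}) and (\ref{eq:G14}) all rest on the key lemma $\alpha(R^\seq) = \alpha(R)^{\ast}$, where $\alpha(R^\seq)$ uses the multirelational Kleene star and $\alpha(R)^{\ast}$ the reflexive-transitive closure in $\Rel$. For the inclusion $\subseteq$ I would use continuity of $R^\seq$, write it as the supremum of $X_0 = \emptyset$, $X_{n+1} = 1 \cup R \seq X_n$, and check $\alpha(X_n) \subseteq \alpha(R)^{\ast}$ by induction on $n$ from $\alpha(R \seq X_n) \subseteq \alpha(R)\,\alpha(X_n)$ (Lemma~\ref{lemma.alpha-props}). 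For the reverse inclusion, since $\alpha(R)^{\ast}$ is the least $T$ with $\Id \cup \alpha(R)\,T \subseteq T$, it suffices to prove $\Id \cup \alpha(R)\,\alpha(R^\seq) \subseteq \alpha(R^\seq)$: the first summand is $\alpha(1) \subseteq \alpha(R^\seq)$, and for the second I would use $\alpha(R)\,\alpha(R^\seq) = \alpha(\fis(R))\,\alpha(R^\seq) = \alpha(\fis(R) \seq R^\seq)$ (Lemma~\ref{lemma.alpha-props}, since $\fis(R)$ is inner deterministic) together with the containment $\fis(R) \seq R^\seq \subseteq \down{(R \seq R^\seq)}$, $\alpha(\down{T}) = \alpha(T)$, and $R \seq R^\seq \subseteq R^\seq$. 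Granting the lemma, $[ R^\seq ] = \fboxr{\alpha(R)^{\ast}}$, so (\ref{eq:G6}) and (\ref{eq:G7}) are the relational box unfold $\fboxr{T^{\ast}}Q = Q \cap \fboxr{T}\fboxr{T^{\ast}}Q$ and the relational box induction law, both dispatched by the relational fixpoint calculus. For (\ref{eq:G14}) I would additionally note that $\langle R^\seq \rangle P$ is the least fixpoint of $Z \mapsto P \cup \langle R \rangle Z$ (continuity of $R^\seq$ plus (\ref{eq:G10}) and (\ref{eq:G11})), set $W = [ R^\seq ](\langle R \rangle P \to P)$ so that $W \subseteq (\langle R \rangle P \to P) \cap [ R ] W$ by the unfold, and prove $W \cap Y_n \subseteq P$ by induction on the approximants $Y_0 = \emptyset$, $Y_{n+1} = P \cup \langle R \rangle Y_n$ of $\langle R^\seq \rangle P$, using at the induction step the mixed inclusion $[ R ] W \cap \langle R \rangle Y_n \subseteq \langle R \rangle(W \cap Y_n) \subseteq \langle R \rangle P$ and $W \subseteq \langle R \rangle P \to P$.

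I expect the main obstacle to be the reverse inclusion of the key lemma $\alpha(R^\seq) = \alpha(R)^{\ast}$: because $\alpha$ does not commute with Peleg composition in general (Example~\ref{example.alpha-counter}), one cannot push $\alpha$ through the fixpoint naively, and the argument must exploit that $R^\seq$ contains the unit $1$ in order to recover, through down-closure, precisely the relational successors that the atomised composition $\fis(R) \seq R^\seq$ would otherwise drop. The subsequent assembly of (\ref{eq:G14}) from this lemma, the least-fixpoint description of $\langle R^\seq \rangle$, and the mixed box/diamond inclusion is the other place where care is needed.
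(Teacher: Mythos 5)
Your proposal is correct in substance but takes a genuinely different route from the paper, whose own proof is deliberately thin: it reuses the Peleg-diamond axioms (\ref{eq:G10})--(\ref{eq:G13}) and (\ref{eq:G15}) from prior work, reports that the remaining axioms were \emph{validated with Isabelle}, and only writes out the counterexample to (\ref{eq:G3}). You instead supply explicit arguments for every axiom. Your reductions of (\ref{eq:G1}), (\ref{eq:G2}), (\ref{eq:G4}), (\ref{eq:G8}) to relational box laws via $\fboxalp{R}=\fboxr{\alpha(R)}$, the identity $\alpha(R\iu S)=\dom(S)\,\alpha(R)\cup\dom(R)\,\alpha(S)$ for (\ref{eq:G5}), the mixed inclusion $\fboxalp{R}A\cap\fdiaast{R}B\subseteq\fdiaast{R}(A\cap B)$ driving (\ref{eq:G9}) and (\ref{eq:G14}), and $\dom(\alpha(R))\subseteq\dom(R)$ for (\ref{eq:G16}) all check out. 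The genuinely new ingredient, nowhere in the paper, is the lemma $\alpha(R^\seq)=\alpha(R)^{*}$; your proof of the hard inclusion via $\fis(R)\seq R^\seq\subseteq\down{(R\seq R^\seq)}$ and $\alpha(\down{T})=\alpha(T)$ is exactly the right way around the failure of $\alpha$ to preserve $\seq$, and it buys a uniform treatment of (\ref{eq:G6}) and (\ref{eq:G7}) as relational star unfold and induction --- structure the Isabelle route does not expose.

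Two technical points need repair. First, you cite Lemma~\ref{lemma.alpha-props} for $\alpha(\fis(R))\,\alpha(R^\seq)=\alpha(\fis(R)\seq R^\seq)$, but that lemma as stated requires \emph{both} factors to be deterministic, and $R^\seq$ is not. The stronger fact --- equality whenever the first factor is inner deterministic --- is true (for inner deterministic $Q=T1$ one computes $Q\seq S=TS$ directly), but it must be proved or cited separately. Second, you repeatedly work with $\omega$-indexed approximants ($X_n$ for $R^\seq$, $Y_n$ for $\fdiaast{R^\seq}P$). Peleg lifting is not Scott-continuous (a univalent refinement of a directed union need not factor through a single member, nor need an infinite subset of $\dom(\bigcup_i X_i)$ lie in some $\dom(X_i)$), and the paper only guarantees these least fixpoints by Knaster--Tarski, not that they close at $\omega$. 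Your inductions survive because $\alpha$ and $({-}\cap W)$ preserve arbitrary unions, so limit stages are trivial, but they must be phrased as transfinite inductions or replaced by prefixpoint arguments (e.g.\ for $\alpha(R^\seq)\subseteq\alpha(R)^{*}$, show that $\{(a,B)\mid B\subseteq\alpha(R)^{*}(a)\}$ is closed under $1\cup R\seq({-})$).
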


\begin{proof}
  Axioms (\ref{eq:G10})-(\ref{eq:G13}) and (\ref{eq:G15}) belong to Peleg's concurrent dynamic logic.
  Algebraic variants have already been derived~\cite{FurusawaStruth2015a}.
  The remaining axioms, except (\ref{eq:G3}), have been validated with Isabelle.
  As a counterexample to (\ref{eq:G3}), consider the relation $R = \{ (a,\{a,b\}) \}$ from Example~\ref{example.alpha-counter} with $[ R \seq R ] \emptyset$.
  Then $\alpha(R \seq R) = \emptyset \subset \{ (a,a),(a,b) \} = \alpha(R) \alpha(R)$ and therefore $[ R ] [ R ] \emptyset = [ \alpha(R) \alpha(R) ] \emptyset = \{ (b,b) \} \subset \{ (a,a), (b,b) \} = [ \emptyset ] \emptyset = [ R \seq R ] \emptyset$.
\end{proof}

Goldblatt's original axioms are therefore unsound with respect to the intended multirelational semantics.
The failure of (\ref{eq:G3}) may seem surprising: after all, $(R \seq S) \seq P = R \seq (S \seq P)$ holds for all composable multirelations $R$ and $S$ and multirelational tests $P$~\cite{FurusawaStruth2015a}.
Yet the proof of Proposition~\ref{proposition.goldblatt} shows that the weak preservation of Peleg composition by $\alpha$, namely
\begin{equation*}
  \alpha(R \seq S) \subseteq \alpha(R) \alpha(S),
\end{equation*}
in Lemma~\ref{lemma.alpha-props} together with Example~\ref{example.alpha-counter} blocks any proof of (\ref{eq:G3}).
In light of Lemma~\ref{lemma.alpha-props}, at least a weak version of (\ref{eq:G3}) can be derived.

\begin{lemma}
  \label{lemma.goldblatt-repair}
  Let $R : X \rto \Pow Y$, $S : Y \rto \Pow Z$ and $P \subseteq Z$.
  Then $[ R ] [ S ] P \subseteq [R \seq S] P$.
\end{lemma}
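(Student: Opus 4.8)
The plan is to reduce the statement to relational boxes through the identity $\fboxalp{R} = \fboxr{\alpha(R)}$, and then invoke the weak preservation result $\alpha(R \seq S) \subseteq \alpha(R)\,\alpha(S)$ from Lemma~\ref{lemma.alpha-props}. First I would recall the two standard facts about the relational box operator that are needed: it distributes over relational composition, $\fboxr{T T'} P = \fboxr{T}\,\fboxr{T'} P$, and it is antitone in its relation argument, i.e. $T \subseteq T'$ implies $\fboxr{T'} P \subseteq \fboxr{T} P$ (while it preserves $\subseteq$ in the set argument). Both are classical and have already been used implicitly in the preceding corollaries.

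With these in hand the argument is a short chain of (in)equalities. Since in this section $[-] = \fboxalp{-}$, we have $[R][S]P = \fboxr{\alpha(R)}\,\fboxr{\alpha(S)} P$ by definition, which equals $\fboxr{\alpha(R)\,\alpha(S)} P$ by the composition law for relational boxes. Applying antitonicity of $\fboxr{-}$ in its first argument to the inclusion $\alpha(R \seq S) \subseteq \alpha(R)\,\alpha(S)$ from Lemma~\ref{lemma.alpha-props} gives $\fboxr{\alpha(R)\,\alpha(S)} P \subseteq \fboxr{\alpha(R \seq S)} P$, and the right-hand side is $\fboxalp{R \seq S} P = [R \seq S] P$ by definition. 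Concatenating yields $[R][S]P \subseteq [R \seq S] P$.

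There is essentially no obstacle here; the only point requiring a moment's care is the direction of the inclusion induced by $\alpha(R \seq S) \subseteq \alpha(R)\,\alpha(S)$: because the box reverses inclusions in its first argument, the \emph{smaller} relation $\alpha(R \seq S)$ produces the \emph{larger} box, which is exactly the direction needed for the claimed $\subseteq$. It is also worth remarking, echoing the discussion preceding the lemma, that Example~\ref{example.alpha-counter} witnesses a strict inclusion $\alpha(R \seq R) \subset \alpha(R)\,\alpha(R)$, so the converse inclusion fails in general and Lemma~\ref{lemma.goldblatt-repair} is the best one can obtain along these lines.
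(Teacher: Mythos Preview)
Your proof is correct and essentially identical to the paper's: both unfold $[\,-\,]$ to $\fboxr{\alpha(-)}$, use the composition law $\fboxr{\alpha(R)}\fboxr{\alpha(S)}P=\fboxr{\alpha(R)\alpha(S)}P$, and then apply antitonicity of $\fboxr{-}$ in its first argument to the inclusion $\alpha(R\seq S)\subseteq\alpha(R)\alpha(S)$ from Lemma~\ref{lemma.alpha-props}. Your remark about the direction of the inclusion and the reference to Example~\ref{example.alpha-counter} are also in line with the paper's discussion.
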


\begin{proof}
  $\fboxalp{R} \fboxalp{S} P = \fboxr{\alpha(R)} \fboxr{\alpha(S)} P = \fboxr{\alpha(R) \alpha(S)} P \subseteq \fboxr{\alpha(R \seq S)} P = \fboxalp{R \seq S}P$, because boxes are order-reversing in their first arguments.
\end{proof}

Equality holds only in special cases such as $[ \down{R} \seq S] P = [ R ] [ S ] P$ or if $S$ is total.
The inclusion from Lemma~\ref{lemma.goldblatt-repair} should therefore replace (\ref{eq:G3}) in Goldblatt's axioms.

Goldblatt proves that his extension of propositional dynamic logic, which is a variant of Peleg's concurrent dynamic logic~\cite{Peleg1987} with the Nerode-Wijesekera box operator, is finitely axiomatisable and has the finite model property, which implies decidability.
Whether these results still hold for the sound axiomatisation using the formula in Lemma~\ref{lemma.goldblatt-repair} remains to be seen.

\begin{remark}
  The above counterexample also shows that $\fboxalp{-}$, and consequently $\fdiaalp{-}$, do not generally yield actions of multirelations on powersets with respect to $\seq$, which is atypical for modal operators.
  For relational modalities, $\fboxr{R S} = \fboxr{R} \circ \fboxr{S}$ and $\fdiar{R S} = \fdiar{R} \circ \fdiar{S}$; in fact, $\fboxr{-}$ and $\fdiar{-}$ are inclusion functors $\Rel \to \Set$.
  Similarly, for multirelational modalities, $\fdiaast{R \seq S} = \fdiaast{R} \circ \fdiaast{S}$ and $\fboxast{R \seq S} = \fboxast{R} \circ \fboxast{S}$~\cite{Peleg1987}, so that $\fboxr{-}$ and $\fdiar{-}$ are inclusion maps from the algebra of multirelations (which is not a category) into $\Set$.

  For $\fdiaalp{-}$ and $\fboxalp{-}$, by contrast, these relationships fail due to the approximative nature of $\alpha$, as shown in Proposition~\ref{proposition.goldblatt} and Lemma~\ref{lemma.goldblatt-repair}.
  In the outer and inner deterministic case, we get similar results, owing to the isomorphisms with $\Rel$.
  For instance, for $R : X \rto Y$ and $S : Y \rto Z$,
  \begin{gather*}
    \fboxalp{\Lambda(R) \seq \Lambda(S)} = \fboxalp{\Lambda(R)} \circ \fboxalp{\Lambda(S)}, \qquad
    \fdiaalp{\Lambda(R) \seq \Lambda(S)} = \fdiaalp{\Lambda(R)} \circ \fdiaalp{\Lambda(S)}, \\
    \fboxalp{\Lambda(\Id)} = \Id = \fdiaalp{\Lambda(\Id)}.
  \end{gather*}
  Further,
  \begin{gather*}
    \fboxalp{\eta(R) \seq \eta(S)} = \fboxalp{\eta(R)} \circ \fboxalp{\eta(S)}, \qquad
    \fdiaalp{\eta(R) \seq \eta(S)} = \fdiaalp{\eta(R)} \circ \fdiaalp{\eta(S)}, \\
    \fboxalp{\eta(\Id)} = \Id = \fdiaalp{\eta(\Id)}.
  \end{gather*}
\end{remark}


\section{Conclusion}

In this trilogy of articles, we have studied the inner structure of multirelations and the categories of outer and inner deterministic and univalent multirelations.
In the current article, the last in this trilogy, we have used this development to formalise Nerode and Wijesekera's alternative box operator for concurrent dynamic logic in an extension of power allegories, added a new De Morgan dual diamond and related these two operators with Peleg's modalities for multirelations and with relational modalities.
As an application, we have derived an algebraic variant of Goldblatt's axioms for concurrent dynamic logic and shown that Nerode and Wijesekera's box does not yield an action on sets, which is atypical for modal operators.

While we use a multirelational language of concrete relations and multirelations in this work, an axiomatic extension of the abstract allegorical approach, which equips boolean power allegories with multirelational operations, is the most natural continuation of this work.
The characterisation of intuitionistic modal algebras based on locally complete allegories is another interesting question.
Beyond the forward modal operators considered so far, backward modalities could be defined using our structural approach, and their application in the semantics and verification of programs or specifications with alternating nondeterminism should be explored.
The preorders $\subh$, $\subs$ and $\subem$ of multirelations seem related to the testing preorders for probabilistic processes studied by Deng et al.~\cite{DengGlabbeekHennessyMorgan2008}, so it might be interesting to see how the modalities studied in concurrent dynamic logics relate to the Hennessy-Milner modalities in their approach.

\paragraph{Acknowledgement}
Hitoshi Furusawa and Walter Guttmann thank the Japan Society for the Promotion of Science for supporting part of this research through a JSPS Invitational Fellowship for Research in Japan.


\bibliographystyle{alpha}
\bibliography{multirel}

\appendix

\section{Basis}
\label{section.basis}

Almost every operation in this paper, as well as~\cite{FurusawaGuttmannStruth2023a,FurusawaGuttmannStruth2023b} can be defined in terms of a basis of six operations that mix the relational and the multirelational language: the relational operations $-$, $\cap$, $/$ and the multirelational operations $1$, $\iu$, $\seq$.
Here we extend the list from~\cite{FurusawaGuttmannStruth2023b} with definitions of the modal operators.

\begin{multicols}{3}
\begin{itemize}
\item $R \cup S = -(-R \cap -S)$
\item $R - S = R \cap -S$
\item $\emptyset = R \cap -R$
\item $U = -\emptyset$
\item $\up{R} = R \iu U$
\item ${\in} = \up{1}$
\item $\Id = 1 / 1$
\item $R^\converse = -(-\Id / R)$
\item $S R = -(-S / R^\converse)$
\item ${\ni} = {\in}^\converse$
\item $R \backslash S = (S^\converse / R^\converse)^\converse$
\item $\syq{R}{S} = (R \backslash S) \cap (R^\converse / S^\converse)$
\item $\Lambda(R) = \syq{R^\converse}{\in}$
\item $\Pow(R) = \Lambda({\ni} R)$
\item $R_\kleisli = \Pow(R {\ni})$
\item $\mu = \Id_\kleisli$
\item $\Omega = {\in} \backslash {\in}$
\item $C = \syq{\in}{-\in}$
\item $\icpl{R} = R C$
\item $R \ii S = \icpl{(\icpl{R} \iu \icpl{S})}$
\item $\down{R} = X \ii U$
\item $\convex{R} = \up{R} \cap \down{R}$
\item $\iuone = 1 \ii \icpl{1}$
\item $\iione = \icpl{\iuone}$
\item $\dual{R} = -\icpl{R}$
\item $R \seqint S = \icpl{(R \seq \icpl{S})}$
\item $R_\seq = (\Lambda({\ni} 1) \seq 1^\converse R 1) \mu$
\item $R / S = R / S_\seq$
\item $\iuatoms = U 1$
\item $\iiatoms = \icpl{\iuatoms}$
\item $\nu(R) = R - \iuone$
\item $\tau(R) = R \cap \iuone$
\item $\alpha(R) = R {\ni}$
\item $\fis(R) = \down{R} \cap \iuatoms$
\item $\fus(R) = 1 R_\kleisli$
\item $\cofission{R} = \up{R} \cap \iiatoms$
\item $\cofusion{R} = \icpl{\fus(\icpl{R})}$
\item $\dom(R) = \Id \cap R R^\converse$
\item $\neg P = \Id - P$
\item $\fdiar{R}{P} = \dom(R P)$
\item $\fboxr{R}{P} = \neg \fdiar{R}{\neg P}$
\item $\fdiaast{R}{P} = \dom(R \seq P)$
\item $\fboxast{R}{P} = \neg \fdiaast{R}{\neg P}$
\item $\fdiaalp{R}{P} = \fdiar{\alpha(R)}{P}$
\item $\fboxalp{R}{P} = \neg \fdiaalp{R}{\neg P}$
\item $\gfdiar{R} = \Pow(R^\converse)$
\item $\gfboxr{R} = \Lambda({\ni}/R)$
\item $\gfdiaast{R} = \Lambda(\up{R} {}^\converse)$
\item $\gfboxast{R} = \Lambda(\up{R} \dual{} {}^\converse)$
\item $\gfdiaalp{R} = \gfdiar{\alpha(R)}$
\item $\gfboxalp{R} = \gfboxr{\alpha(R)}$
\item $R \subs S \Leftrightarrow S \subseteq \up{R}$
\item $R \subh S \Leftrightarrow R \subseteq \down{S}$
\item $R \subem S \Leftrightarrow R \subh S \wedge R \subs S$
\end{itemize}
\end{multicols}

If $\seq$ is extended to relations, the simpler definition $R_\seq = \Id \seq R$ may be used.
Alternatively, we could of course replace Peleg composition by Peleg lifting in the basis.
We could also replace relational intersection $\cap$ with a multirelational intersection variant $\cap$ in the basis: relational $\cap$ is obtained by $R \cap S = \alpha(R 1 \cap S 1)$ which can be defined in terms of multirelational $\cap$ and the rest of the basis.
Yet we do not know whether a multirelational $-$ could replace the relational variant.
Finally, relational $/$ is required to define some of the operations in our list as it is the only operation in the basis that can change types.
We have so far not attempted to axiomatise the basic operations in the sense of (heterogeneous) relation algebra~\cite{SchmidtStroehlein1989}, concurrent dynamic algebra~\cite{FurusawaStruth2016} or likewise.

\end{document}